\newcommand{\lambdabar}{\lambda/2\pi}
\it\color{black},
\bfseries\color{red!40!black},
\bfseries\color{black},
\bfseries\color{green!40!black},
\ttfamily\color{yellow!30!black},
\definecolor{backcolour}{rgb}{0.95,0.95,0.92}
\definecolor{codegreen}{rgb}{0,0.6,0}
\definecolor{codegray}{rgb}{0.5,0.5,0.5}
\definecolor{codepurple}{rgb}{0.58,0,0.82}
\definecolor{magenta}{rgb}{0.6,0,0.6}
\lstdefinestyle{mystyle}{
    backgroundcolor=\color{backcolour},   
    commentstyle=\color{codegreen},
    keywordstyle=\color{magenta},
    numberstyle=\tiny\color{codegray},
    stringstyle=\color{codepurple},
    basicstyle=\ttfamily\footnotesize,
    breakatwhitespace=false,         
    breaklines=true,                 
    captionpos=b,                    
    keepspaces=true,                 
    numbers=left,                    
    numbersep=5pt,                  
    showspaces=false,                
    showstringspaces=false,
    showtabs=false,                  
    tabsize=2
}
\newtheorem{theorem}{\indent Theorem}[section]
\newtheorem{proposition}[theorem]{\indent Proposition}
\theoremstyle{definition}
\newtheorem{definition}{\indent Definition}[section]
\newcounter{chapter}
\renewcommand{\thechapter}{\arabic{chapter}}
\titleformat{\chapter}[block]
  {\normalfont\LARGE\bfseries\raggedright}
  {Chapter \thechapter\quad}
  {0pt}
  {}
\titlespacing*{\chapter}{0pt}{20pt}{10pt}
\bfseries\color{blue}}
\quad\color{blue}}
\hfill\color{black}\contentspage} 
\hfill\color{black}\contentspage} 
\begin{document}
	
\title{\LARGE \bfseries Phase Space Modeling of Extended Sources Based on Wigner Distribution and Hamiltonian Optics}

\author{
    \large Rongqi Shang\textsuperscript{1}, Donglin Ma\textsuperscript{1,2}\thanks{Corresponding author: madonglin@hust.edu.cn} \\[1em]
    \small \textsuperscript{1}School of Mathematics and Statistics, Huazhong University of Science and Technology, Wuhan 430074, China \\
    \small \textsuperscript{2}School of Optical and Electronic Information, Huazhong University of Science and Technology, Wuhan 430074, China
}

\date{}

\maketitle

\begin{abstract}
Precise modeling of extended sources is a central challenge in modern optical engineering, laser physics, and computational lithography. Unlike ideal point sources or completely incoherent thermal radiation sources, real-world light sources---such as high-power laser diode arrays, superluminescent diodes (SLD), extreme ultraviolet (EUV) lithography sources, and beams transmitted through atmospheric turbulence---typically exhibit partial spatial coherence. 

Traditional geometric optics based on ray tracing ignores diffraction and interference effects; while classical wave optics is accurate, the computational cost of handling four-dimensional correlation functions for partially coherent fields is enormous. To balance computational efficiency and physical accuracy, phase space optics provides a unified theoretical framework. By introducing the Wigner distribution function (WDF), we can map the light field into a joint space-time-spatial frequency domain $(\bm{r}, \bm{p})$. This description not only retains all the information of wave optics (including interference terms) but also naturally transitions to the ray description of Hamiltonian optics in the short-wavelength limit, governed by Liouville's theorem of phase space volume conservation. 

This report aims to establish optimal modeling methods based on phase space and Hamiltonian optics for different types of extended sources such as partially coherent light, fully coherent light, and quasi-homogeneous light. The report will derive in detail the mathematical models for each source type and provide strict criteria for the applicability of geometric optics models using mathematical tools such as the Moyal expansion and generalized Fresnel number.

\textbf{Keywords:} Extended sources, partial coherence, Wigner distribution function, Hamiltonian optics, phase space, symplectic geometry
\end{abstract}

\section{Introduction}
Extended sources pose a core challenge for precise modeling in modern optical engineering, laser physics, and computational lithography. Unlike ideal point sources or completely incoherent thermal radiation sources, real-world sources---such as high-power laser diode arrays, superluminescent diodes (SLD), extreme ultraviolet (EUV) lithography sources, and beams transmitted through atmospheric turbulence---typically exhibit partial spatial coherence \cite{Wolf2007}.

Traditional geometric optics based on ray tracing ignores diffraction and interference effects; while classical wave optics  is accurate, the computational cost of handling four-dimensional correlation functions for partially coherent fields is enormous. To balance computational efficiency and physical accuracy, phase space optics provides a unified theoretical framework \cite{Bastiaans1997}.

By introducing the Wigner distribution function (WDF), we can map the light field into a joint space-time-spatial frequency domain $(\bm{r}, \bm{p})$ \cite{Wigner1932, Bastiaans1986}. This description not only retains all the information of wave optics (including interference terms) but also naturally transitions to the ray description of Hamiltonian optics  in the short-wavelength limit, i.e., phase space volume conservation governed by Liouville's theorem \cite{Arnold1989}.

This report aims to establish optimal modeling methods based on phase space and Hamiltonian optics for different types of extended sources such as partially coherent light, fully coherent light, and quasi-homogeneous light. The report will derive in detail the mathematical models for each source type and provide strict criteria for the applicability of geometric optics models using mathematical tools such as the Moyal expansion  and generalized Fresnel number .

\section{Optical Foundations of Extended Sources}

\subsection{Phase Space Representation of Basic Source Models}

A plane wave propagating along direction cosines $\bm{s}_0 = (s_{0x}, s_{0y})$ is described as:
\begin{equation}
\psi_{\text{plane}}(\bm{r}) = A \exp(i k \bm{s}_0 \cdot \bm{r})
\end{equation}

The Wigner distribution function of a plane wave is:
\begin{equation}
\mathcal{W}_{\text{plane}}(\bm{r}, \bm{p}) = |A|^2 \delta(\bm{p} - \bm{p}_0)
\end{equation}
where $\bm{p}_0 = k \bm{s}_0$.

A point source located at position $\bm{r}_0$ is described as:
\begin{equation}
\psi_{\text{point}}(\bm{r}) = A \delta(\bm{r} - \bm{r}_0)
\end{equation}

The Wigner distribution function of a point source is:
\begin{equation}
\mathcal{W}_{\text{point}}(\bm{r}, \bm{p}) = |A|^2 \delta(\bm{r} - \bm{r}_0)
\end{equation}

These basic models appear as extreme cases in phase space: plane waves are completely concentrated at a point in momentum space, while point sources are completely concentrated at a point in position space \cite{Alonso2011}.

\section{Mathematical Foundations of Phase Space Optics}
To model extended sources, one must go beyond simple first-order field intensity descriptions $I(\bm{r})$ and instead handle second-order statistical properties of the field.

\subsection{Cross-Spectral Density Function (CSD)}
For statistically stationary extended sources, the core descriptor is the cross-spectral density function $W(\bm{r}_1, \bm{r}_2, \omega)$ \cite{Wolf1955}.

\begin{definition}[Cross-Spectral Density Function]
For a statistically stationary light field at frequency $\omega$, the cross-spectral density function is defined as:
\begin{equation}
W(\bm{r}_1, \bm{r}_2, \omega) = \left\langle U^*(\bm{r}_1, \omega) U(\bm{r}_2, \omega) \right\rangle
\end{equation}
where $U(\bm{r}, \omega)$ is the complex analytic signal of the light field, and $\left\langle \cdot \right\rangle$ denotes ensemble averaging over the statistical ensemble.
\end{definition}

Based on the CSD, two important physical quantities can be defined:

\begin{definition}[Spectral Density and Spectral Coherence Degree]$\ $

\begin{enumerate}
\item Spectral density (i.e., light intensity distribution):
\begin{equation}
S(\bm{r}, \omega) = W(\bm{r}, \bm{r}, \omega)
\end{equation}

\item Spectral coherence degree (complex coherence degree):
\begin{equation}
\mu(\bm{r}_1, \bm{r}_2, \omega) = \frac{W(\bm{r}_1, \bm{r}_2, \omega)}{\sqrt{S(\bm{r}_1, \omega)S(\bm{r}_2, \omega)}}
\end{equation}
where $|\mu(\bm{r}_1, \bm{r}_2, \omega)| \leq 1$.
\end{enumerate}
\end{definition}

The CSD must satisfy specific mathematical conditions to ensure physical realizability:

\begin{proposition}[Nonnegative Definiteness]
The cross-spectral density function $W(\bm{r}_1, \bm{r}_2, \omega)$ must be a Hermitian and nonnegative definite Hilbert-Schmidt kernel. That is, for any square-integrable function $f(\bm{r})$, it satisfies:
\begin{equation}
\iint W(\bm{r}_1, \bm{r}_2, \omega) f^*(\bm{r}_1) f(\bm{r}_2) \, \mathrm{d}^2\bm{r}_1 \mathrm{d}^2\bm{r}_2 \geq 0
\end{equation}
\end{proposition}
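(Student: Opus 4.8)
The plan is to prove nonnegative definiteness directly from the defining relation of the CSD as an ensemble average, collapsing the double integral into the expectation of a squared modulus. First I would substitute $W(\bm{r}_1, \bm{r}_2, \omega) = \langle U^*(\bm{r}_1, \omega) U(\bm{r}_2, \omega) \rangle$ into the quadratic form and interchange the ensemble average $\langle \cdot \rangle$ with the spatial integration over $\bm{r}_1$ and $\bm{r}_2$. Once the average is pulled outside, the integrand factors as $\bigl(U^*(\bm{r}_1,\omega) f^*(\bm{r}_1)\bigr)\bigl(U(\bm{r}_2,\omega) f(\bm{r}_2)\bigr)$, so the double integral separates into a product of two single integrals sharing the same ensemble.

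Next I would introduce the scalar (ensemble-dependent) random variable $Z = \int U(\bm{r}, \omega) f(\bm{r}) \, \mathrm{d}^2\bm{r}$, for which the $\bm{r}_1$-integral yields $Z^*$ and the $\bm{r}_2$-integral yields $Z$. The quadratic form then reduces to
\begin{equation}
\iint W(\bm{r}_1, \bm{r}_2, \omega) f^*(\bm{r}_1) f(\bm{r}_2) \, \mathrm{d}^2\bm{r}_1 \mathrm{d}^2\bm{r}_2 = \left\langle Z^* Z \right\rangle = \left\langle |Z|^2 \right\rangle \geq 0,
\end{equation}
which is manifestly nonnegative because it is the ensemble average of a pointwise nonnegative quantity. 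For the Hermitian property I would take the complex conjugate of the definition and relabel the arguments, obtaining $W^*(\bm{r}_1, \bm{r}_2, \omega) = \langle U(\bm{r}_1,\omega) U^*(\bm{r}_2,\omega) \rangle = W(\bm{r}_2, \bm{r}_1, \omega)$, which is exactly the required conjugate symmetry.

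The main obstacle is the technical justification for interchanging the ensemble average with the spatial integration, since everything else is a one-line algebraic identity. To license the swap I would invoke a Fubini-type argument: assuming $f \in L^2$ and that the field has finite second-order moments (so that $W$ is a Hilbert--Schmidt kernel, as asserted in the statement), the iterated integral of $|W(\bm{r}_1,\bm{r}_2,\omega) f^*(\bm{r}_1) f(\bm{r}_2)|$ is finite by Cauchy--Schwarz, and the expectation of $|Z|^2$ converges. These integrability conditions guarantee absolute convergence and thus permit exchanging the order of $\langle \cdot \rangle$ and $\iint \mathrm{d}^2\bm{r}_1 \mathrm{d}^2\bm{r}_2$, closing the argument rigorously.
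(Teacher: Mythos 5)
Your proposal is correct and follows essentially the same route as the paper: substitute the ensemble-average definition of $W$, interchange $\langle\cdot\rangle$ with the spatial integration, and recognize the quadratic form as $\langle|Z|^2\rangle\geq 0$ with $Z=\int U(\bm{r},\omega)f(\bm{r})\,\mathrm{d}^2\bm{r}$. Your additional remarks on Hermiticity and the Fubini-type justification for the interchange only make explicit what the paper leaves implicit.
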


\begin{proof}
From the definition of CSD $W(\bm{r}_1, \bm{r}_2) = \left\langle U^*(\bm{r}_1)U(\bm{r}_2) \right\rangle$, for any function $f(\bm{r})$, we have:
\begin{align*}
\iint W(\bm{r}_1, \bm{r}_2) & f^*(\bm{r}_1) f(\bm{r}_2) \, \mathrm{d}^2\bm{r}_1 \mathrm{d}^2\bm{r}_2 \\
&= \left\langle \iint U^*(\bm{r}_1)U(\bm{r}_2) f^*(\bm{r}_1) f(\bm{r}_2) \, \mathrm{d}^2\bm{r}_1 \mathrm{d}^2\bm{r}_2 \right\rangle \\
&= \left\langle \left|\int U(\bm{r}) f(\bm{r}) \, \mathrm{d}^2\bm{r}\right|^2 \right\rangle \geq 0
\end{align*}
Thus the CSD is nonnegative definite.
\end{proof}

\subsection{Wigner Distribution Function (WDF)}
The WDF is a bridge connecting wave optics and radiometry, mapping the light field into phase space $(\bm{r}, \bm{p})$ \cite{Bastiaans1979, Bastiaans1980}.

\begin{definition}[Wigner Distribution Function]
The WDF is defined as the Fourier transform of the CSD with respect to the difference coordinate:
\begin{equation}
\mathcal{W}(\bm{r}, \bm{p}) = \frac{1}{(2\pi)^2} \int W\left(\bm{r} - \frac{\Delta \bm{r}}{2}, \bm{r} + \frac{\Delta \bm{r}}{2}\right) e^{-i \bm{p} \cdot \Delta \bm{r}} \, \mathrm{d}^2 \Delta \bm{r}
\end{equation}
where $\bm{r} = (\bm{r}_1 + \bm{r}_2)/2$ is the center position coordinate, $\bm{p} = k \bm{s}_{\perp}$ is the spatial frequency (momentum) vector, $k = 2\pi/\lambda$ is the wavenumber, and $\bm{s}_{\perp}$ is the transverse component of the propagation direction unit vector.
\end{definition}

The WDF has the following important properties:

\begin{proposition}[Basic Properties of WDF]$\ $

\begin{enumerate}
\item \textbf{Real-valuedness:} $\mathcal{W}(\bm{r}, \bm{p}) \in \mathbb{R}$, but may take negative values, hence called a "quasi-probability distribution."
\item \textbf{Marginal distributions:}
\begin{align}
\int \mathcal{W}(\bm{r}, \bm{p}) \, \mathrm{d}^2\bm{p} &= S(\bm{r}) \quad \text{(spatial light intensity)} \\
\int \mathcal{W}(\bm{r}, \bm{p}) \, \mathrm{d}^2\bm{r} &= J(\bm{p}) \quad \text{(radiant intensity or angular spectrum)}
\end{align}
\item \textbf{Hermitian property:} $\mathcal{W}(\bm{r}, \bm{p}) = \mathcal{W}(\bm{r}, \bm{p})^*$
\item \textbf{Translation property:} If $U(\bm{r}) \rightarrow U(\bm{r}-\bm{r}_0)e^{i\bm{p}_0\cdot\bm{r}}$, then $\mathcal{W}(\bm{r},\bm{p}) \rightarrow \mathcal{W}(\bm{r}-\bm{r}_0, \bm{p}-\bm{p}_0)$
\end{enumerate}
\end{proposition}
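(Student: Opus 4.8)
The plan is to treat the four properties separately, since each follows from a short manipulation of the defining Fourier integral combined with the Hermitian structure of the CSD already established in the nonnegative-definiteness proposition (namely $W^*(\bm{r}_1,\bm{r}_2)=W(\bm{r}_2,\bm{r}_1)$). No deep machinery is needed; the work is in choosing the right change of variables and tracking conventions.

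First, for real-valuedness (item 1) and the Hermitian property (item 3)—which are in fact the single statement $\mathcal{W}=\mathcal{W}^*$—I would take the complex conjugate of the defining integral. Conjugation flips the sign of the exponent and replaces the kernel by $W^*\!\left(\bm{r}-\tfrac{\Delta\bm{r}}{2},\bm{r}+\tfrac{\Delta\bm{r}}{2}\right)$. Invoking Hermiticity of the CSD turns this into $W\!\left(\bm{r}+\tfrac{\Delta\bm{r}}{2},\bm{r}-\tfrac{\Delta\bm{r}}{2}\right)$, and a single substitution $\Delta\bm{r}\to-\Delta\bm{r}$ restores the original integrand exactly, giving $\mathcal{W}^*=\mathcal{W}$.

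For the marginals (item 2) I would split into the two integrations. Integrating over $\bm{p}$ and exchanging the order of integration, the $\bm{p}$-integral yields $(2\pi)^2\delta(\Delta\bm{r})$, which collapses $\Delta\bm{r}$ to zero and leaves $W(\bm{r},\bm{r})=S(\bm{r})$. For the angular marginal I would integrate over $\bm{r}$ after the volume-preserving change of variables $(\bm{r},\Delta\bm{r})\mapsto(\bm{r}_1,\bm{r}_2)$ with unit Jacobian, under which $\Delta\bm{r}=\bm{r}_2-\bm{r}_1$; the exponential then factorizes and the double integral separates into $\langle\tilde{U}^*(\bm{p})\tilde{U}(\bm{p})\rangle$, where $\tilde{U}$ is the spatial Fourier transform (angular spectrum) of $U$, so that $J(\bm{p})=\langle|\tilde{U}(\bm{p})|^2\rangle$ up to the normalization constant.

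Finally, the translation property (item 4) is a direct substitution: under $U(\bm{r})\to U(\bm{r}-\bm{r}_0)e^{i\bm{p}_0\cdot\bm{r}}$ the CSD transforms as $W'(\bm{r}_1,\bm{r}_2)=W(\bm{r}_1-\bm{r}_0,\bm{r}_2-\bm{r}_0)\,e^{i\bm{p}_0\cdot(\bm{r}_2-\bm{r}_1)}$, and inserting this into the defining integral (with $\bm{r}_2-\bm{r}_1=\Delta\bm{r}$) lets the extra phase combine with $e^{-i\bm{p}\cdot\Delta\bm{r}}$ into $e^{-i(\bm{p}-\bm{p}_0)\cdot\Delta\bm{r}}$, while the shifted center argument becomes $\bm{r}-\bm{r}_0$, yielding $\mathcal{W}(\bm{r}-\bm{r}_0,\bm{p}-\bm{p}_0)$. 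I expect the only genuine obstacle to be the angular marginal, where one must fix the Fourier-transform convention carefully so that the factor $1/(2\pi)^2$ lands correctly in the definition of $J(\bm{p})$; the remaining three items are essentially immediate once Hermiticity and the change of variables are in hand.
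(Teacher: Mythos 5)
Your proposal is correct and follows essentially the same route as the paper: conjugation plus CSD Hermiticity and the substitution $\Delta\bm{r}\to-\Delta\bm{r}$ for real-valuedness, and the delta-function collapse of the $\bm{p}$-integral for the spatial marginal. You in fact go slightly further than the paper, which proves only items 1 and 2 (and only the spatial marginal); your treatments of the angular marginal via the unit-Jacobian change of variables and of the translation property via the transformed CSD $W'(\bm{r}_1,\bm{r}_2)=W(\bm{r}_1-\bm{r}_0,\bm{r}_2-\bm{r}_0)e^{i\bm{p}_0\cdot(\bm{r}_2-\bm{r}_1)}$ are both sound.
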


\begin{proof}$\ $

(1) Proof of real-valuedness:
\begin{align*}
\mathcal{W}^*(\bm{r}, \bm{p}) &= \frac{1}{(2\pi)^2} \int W^*\left(\bm{r} - \frac{\Delta \bm{r}}{2}, \bm{r} + \frac{\Delta \bm{r}}{2}\right) e^{i \bm{p} \cdot \Delta \bm{r}} \, \mathrm{d}^2 \Delta \bm{r} \\
&= \frac{1}{(2\pi)^2} \int W\left(\bm{r} + \frac{\Delta \bm{r}}{2}, \bm{r} - \frac{\Delta \bm{r}}{2}\right) e^{i \bm{p} \cdot \Delta \bm{r}} \, \mathrm{d}^2 \Delta \bm{r} \quad (\text{using } W^*(\bm{r}_2,\bm{r}_1)=W(\bm{r}_1,\bm{r}_2)) \\
&= \frac{1}{(2\pi)^2} \int W\left(\bm{r} - \frac{\Delta \bm{r}'}{2}, \bm{r} + \frac{\Delta \bm{r}'}{2}\right) e^{-i \bm{p} \cdot \Delta \bm{r}'} \, \mathrm{d}^2 \Delta \bm{r}' \quad (\text{let } \Delta\bm{r}' = -\Delta\bm{r}) \\
&= \mathcal{W}(\bm{r}, \bm{p})
\end{align*}

(2) Proof of marginal distribution (spatial margin example):
\begin{align*}
\int \mathcal{W}(\bm{r}, \bm{p}) \, \mathrm{d}^2\bm{p} &= \frac{1}{(2\pi)^2} \iint W\left(\bm{r} - \frac{\Delta \bm{r}}{2}, \bm{r} + \frac{\Delta \bm{r}}{2}\right) e^{-i \bm{p} \cdot \Delta \bm{r}} \, \mathrm{d}^2 \Delta \bm{r} \mathrm{d}^2\bm{p} \\
&= \int W\left(\bm{r} - \frac{\Delta \bm{r}}{2}, \bm{r} + \frac{\Delta \bm{r}}{2}\right) \delta^2(\Delta \bm{r}) \, \mathrm{d}^2 \Delta \bm{r} \\
&= W(\bm{r}, \bm{r}) = S(\bm{r})
\end{align*}
where we used the Fourier integral representation: $\frac{1}{(2\pi)^2}\int e^{-i\bm{p}\cdot\Delta\bm{r}} \mathrm{d}^2\bm{p} = \delta^2(\Delta\bm{r})$.
\end{proof}

\subsection{Coherent Mode Decomposition}
Coherent mode decomposition provides a rigorous description of partially coherent light based on orthogonal expansion \cite{Gamo1963, Wolf1982}.

\begin{proposition}[Mercer Expansion]
For a CSD satisfying the nonnegative definite condition, it can be expanded as follows:
\begin{equation}
W(\bm{r}_1, \bm{r}_2) = \sum_{n} \lambda_n \phi_n^*(\bm{r}_1) \phi_n(\bm{r}_2)
\end{equation}
where $\lambda_n \geq 0$ are eigenvalues and $\phi_n(\bm{r})$ are orthogonal eigenfunctions satisfying:
\begin{equation}
\int W(\bm{r}_1, \bm{r}_2) \phi_n(\bm{r}_1) \, \mathrm{d}^2\bm{r}_1 = \lambda_n \phi_n(\bm{r}_2)
\end{equation}
\end{proposition}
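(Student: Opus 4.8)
The plan is to recognize the CSD as the integral kernel of a linear operator on $L^2$ and to bring the spectral theory of compact, self-adjoint, positive operators to bear, culminating in Mercer's theorem. First I would define the operator $\hat{W}$ acting on square-integrable fields by $(\hat{W}f)(\bm{r}_2) = \int W(\bm{r}_1, \bm{r}_2) f(\bm{r}_1)\, \mathrm{d}^2\bm{r}_1$, so that the eigenvalue equation in the statement becomes simply $\hat{W}\phi_n = \lambda_n \phi_n$. Three structural properties of the kernel then feed the spectral machinery. (i) The Hermitian symmetry $W(\bm{r}_1, \bm{r}_2) = W^*(\bm{r}_2, \bm{r}_1)$, already exploited when proving real-valuedness of the WDF, makes $\hat{W}$ self-adjoint. (ii) The Hilbert--Schmidt condition $\iint |W(\bm{r}_1, \bm{r}_2)|^2 \,\mathrm{d}^2\bm{r}_1\,\mathrm{d}^2\bm{r}_2 < \infty$, finite for any physical source of finite energy, makes $\hat{W}$ compact. (iii) The nonnegative-definiteness established in the preceding Proposition translates directly (after matching the conjugation convention via Hermiticity) into $\langle f, \hat{W} f\rangle \geq 0$, so that $\hat{W}$ is a positive operator.

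With these three facts secured, the spectral theorem for compact self-adjoint operators supplies a complete orthonormal system $\{\phi_n\}$ of eigenfunctions with real eigenvalues $\lambda_n$ that can accumulate only at zero; positivity then upgrades this to $\lambda_n \geq 0$, and the trace-class identity $\sum_n \lambda_n = \int W(\bm{r},\bm{r})\,\mathrm{d}^2\bm{r} = \int S(\bm{r})\,\mathrm{d}^2\bm{r} < \infty$ (the total radiated energy) controls the summability of the tail. Expanding $\hat{W}$ in its own eigenbasis immediately yields $W(\bm{r}_1,\bm{r}_2) = \sum_n \lambda_n \phi_n^*(\bm{r}_1)\phi_n(\bm{r}_2)$, while the orthonormality $\int \phi_m^*(\bm{r})\phi_n(\bm{r})\,\mathrm{d}^2\bm{r} = \delta_{mn}$ inherited from the spectral theorem is exactly what makes the eigenvalue equation self-consistent when the expansion is substituted back in.

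The hard part will be the mode of convergence of the bilinear series, not any individual computation. The spectral theorem alone guarantees convergence only in the operator (Hilbert--Schmidt) norm, whereas the stated identity is a pointwise one. Promoting it to absolute and uniform convergence is precisely the content of Mercer's theorem, which demands the additional hypothesis that the kernel be continuous on a compact domain. I would therefore close the argument by invoking Mercer's theorem under the physically natural assumptions that $W$ is continuous and that the source has finite support (or sufficiently rapid decay), remarking that on a genuinely unbounded domain one must either restrict to a finite aperture or settle for mean-square rather than uniform convergence. This functional-analytic subtlety, together with verifying that the finite-energy assumption indeed furnishes the Hilbert--Schmidt and trace-class bounds, is where the substance of the proof resides.
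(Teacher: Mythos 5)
The paper states this proposition without proof (it simply cites the coherent-mode literature and moves on), so there is no in-text argument to compare yours against; your proposal supplies what the paper omits. Your route is the standard and correct one: realizing $W$ as the kernel of the integral operator $\hat{W}$, checking self-adjointness from Hermiticity, compactness from the Hilbert--Schmidt bound, and positivity from the preceding nonnegative-definiteness proposition (your remark about matching the conjugation convention is right --- the quadratic form $\iint W(\bm{r}_1,\bm{r}_2)f(\bm{r}_1)f^*(\bm{r}_2)$ is the complex conjugate of the one in the paper's condition, hence also nonnegative), then applying the spectral theorem and Mercer's theorem. You also correctly identify the one genuinely delicate point, namely that the spectral theorem alone gives only mean-square convergence of the bilinear series, and that the pointwise identity asserted in the proposition requires the additional Mercer hypotheses of a continuous kernel on a compact (or effectively compact) domain. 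The only caveat worth flagging is that these hypotheses --- continuity of $W$, finite support or sufficient decay, and finite total energy $\int S(\bm{r})\,\mathrm{d}^2\bm{r} < \infty$ to secure the Hilbert--Schmidt and trace-class bounds --- are genuinely additional assumptions beyond the bare ``nonnegative definite'' condition in the statement; they are physically innocuous for real sources but should be stated explicitly, as the paper's own formulation of the proposition does not include them.
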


The corresponding WDF can be expressed as a weighted sum of mode WDFs:
\begin{equation}
\mathcal{W}(\bm{r}, \bm{p}) = \sum_{n} \lambda_n \mathcal{W}_n(\bm{r}, \bm{p})
\end{equation}
where $\mathcal{W}_n$ is the WDF corresponding to mode $\phi_n$.

\section{Source Type Classification and Optimal Modeling Methods}
Based on the degree of coherence and complexity of phase structure, we classify extended sources into four types and determine optimal modeling strategies combining Hamiltonian optics for each type.

\subsection{Fully Coherent Sources}
\begin{definition}[Fully Coherent Source]
A source whose CSD can be completely factorized, i.e., $|\mu(\bm{r}_1, \bm{r}_2)| \equiv 1$:
\begin{equation}
W(\bm{r}_1, \bm{r}_2) = U^*(\bm{r}_1) U(\bm{r}_2)
\end{equation}
The corresponding WDF is:
\begin{equation}
\mathcal{W}(\bm{r}, \bm{p}) = \frac{1}{(2\pi)^2} \int U^*\left(\bm{r} - \frac{\Delta \bm{r}}{2}\right) U\left(\bm{r} + \frac{\Delta \bm{r}}{2}\right) e^{-i \bm{p} \cdot \Delta \bm{r}} \, \mathrm{d}^2 \Delta \bm{r}
\end{equation}
\end{definition}

The WDF of fully coherent light has an important mathematical property:

\begin{proposition}[Hudson-Piquet Theorem]
The Wigner distribution function of a pure state (fully coherent field) is nonnegative everywhere in phase space if and only if the field amplitude distribution is Gaussian \cite{Hudson1974, Piquet1974}.
\end{proposition}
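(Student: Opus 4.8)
The plan is to prove the two implications separately, the easy sufficiency first and the hard necessity second. Throughout I write $U \in L^2(\mathbb{R}^2)$ for the (finite-energy, normalized) field amplitude of the pure state and $\mathcal{W}_U$ for its WDF as given by the coherent-source formula. For the ``if'' direction I would simply substitute a general Gaussian amplitude $U(\bm{r}) = \exp(-\bm{r}^{\mathsf{T}} A \bm{r} + \bm{b}\cdot\bm{r} + c)$, with $\operatorname{Re} A$ positive definite, into the WDF integral, complete the square in $\Delta\bm{r}$, and evaluate the resulting Gaussian Fourier integral. The output is a (possibly sheared) Gaussian in $(\bm{r},\bm{p})$, manifestly nonnegative, which settles sufficiency.

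For necessity the engine is the \textbf{Moyal overlap identity}: for any finite-energy $U,V$ one has $\int \mathcal{W}_U(\bm{r},\bm{p})\,\mathcal{W}_V(\bm{r},\bm{p})\,\mathrm{d}^2\bm{r}\,\mathrm{d}^2\bm{p} = (2\pi)^{-2}\lvert\langle V, U\rangle\rvert^2$, which I would derive directly from the definition by Fourier/Parseval. I would test $\mathcal{W}_U$ against the Wigner function of a \emph{coherent state} $\phi_{\bm z}$ (a phase-space-shifted ground Gaussian), whose WDF is a \emph{strictly positive} Gaussian bump. The relevant overlap obeys $\lvert\langle \phi_{\bm z}, U\rangle\rvert^2 = e^{-\lvert\bm z\rvert^2/2}\lvert f(\bm z)\rvert^2$, where $f$ is the Bargmann transform of $U$, an entire function of $\bm z\in\mathbb{C}^2$; in particular the coherent states orthogonal to $U$ are exactly the zeros of $f$.

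The crucial step is this: if $f(\bm z_0)=0$ for some $\bm z_0$, then the overlap identity gives $\int \mathcal{W}_U\,\mathcal{W}_{\phi_{\bm z_0}} = 0$, but the integrand is the product of the assumed-nonnegative $\mathcal{W}_U$ with the strictly positive Gaussian $\mathcal{W}_{\phi_{\bm z_0}}$, forcing $\mathcal{W}_U \equiv 0$ and hence $U \equiv 0$, a contradiction. Thus $f$ is \emph{zero-free} on $\mathbb{C}^2$. Since $f$ belongs to the Bargmann--Fock space it satisfies a Gaussian growth bound $\lvert f(\bm z)\rvert \le C\,e^{\lvert\bm z\rvert^2/2}$, so $f$ has order at most $2$. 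A zero-free entire function is the exponential of a globally defined entire function ($\log f$ exists because $\mathbb{C}^2$ is simply connected), and the order-$2$ bound forces that exponent to be a polynomial of degree at most $2$. Inverting the Bargmann transform then returns a Gaussian amplitude $U$, completing the argument.

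I expect the main obstacle to be the several-variable Hadamard-type step: passing from ``zero-free entire of order $\le 2$'' to ``exponential of a quadratic'' is routine in one complex variable, but in $\mathbb{C}^2$ it requires the simple connectivity of the domain together with a Borel--Carath\'eodory / Liouville argument controlling the degree of $\log f$ via the polynomial growth of $\operatorname{Re}\log f = \log\lvert f\rvert$. A secondary technical point is justifying the interchange of integrations in the Moyal identity and pinning down the Gaussian normalization constants, which I would handle by first establishing everything for Schwartz-class $U$ and then extending by density.
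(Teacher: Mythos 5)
Your proposal is correct, and it is considerably more substantive than the paper's own argument: for the hard direction the paper's sketch merely asserts that ``only Gaussian distributions have Wigner functions that are everywhere nonnegative'' and gestures at ``analyzing the moment generating function or characteristic function,'' which is not a proof. What you supply is the classical Hudson argument (in two transverse variables, where the extension is due to Soto and Claverie): the easy implication by direct Gaussian integration, and the converse by testing $\mathcal{W}_U$ against coherent-state Wigner functions through the Moyal overlap identity, concluding that the Bargmann transform $f$ of $U$ is a zero-free entire function with the Fock-space growth bound, hence $e^{g}$ with $g$ a polynomial of degree at most $2$, hence $U$ Gaussian. Each step is sound: if $\int \mathcal{W}_U\,\mathcal{W}_{\phi_{\bm z_0}}=0$ with $\mathcal{W}_U\geq 0$ and $\mathcal{W}_{\phi_{\bm z_0}}>0$, then $\mathcal{W}_U=0$ a.e.\ and so $U=0$ (e.g.\ because $\int \mathcal{W}_U^2=(2\pi)^{-2}\lVert U\rVert^4$); and the several-variable Hadamard step you flag as the main obstacle does go through by restricting $g=\log f$ to complex lines and killing the homogeneous components of degree $\geq 3$. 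Two minor cautions: the paper's labels ``Necessity'' and ``Sufficiency'' are swapped relative to the implications they actually describe, so do not let that confuse the alignment of your two directions with its sketch; and the $(2\pi)^{-2}$ in the Moyal identity and the coherent-state overlap prefactor depend on the paper's specific WDF normalization, so they are worth re-deriving once rather than quoting. Substantively, your route buys an actual proof of the nontrivial implication that the paper leaves as an assertion; a rigorous version of the paper's hinted characteristic-function route would in any case funnel into the same entire-function argument.
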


\begin{proof}[Proof sketch]
Necessity: If the field distribution is Gaussian $U(\bm{r}) = A \exp\left(-\frac{1}{2}\bm{r}^T\bm{A}\bm{r} + \bm{b}^T\bm{r}\right)$, where $\bm{A}$ is a positive definite matrix, then the corresponding WDF is a positive Gaussian distribution.

Sufficiency: If the WDF is everywhere nonnegative and corresponds to a pure state, then by properties of the Wigner function, only Gaussian distributions have Wigner functions that are everywhere nonnegative. This can be proved by analyzing the moment generating function or characteristic function of the Wigner function.
\end{proof}

\textbf{Modeling implication:} For non-Gaussian coherent sources (such as flat-top beams with hard apertures), the WDF must have strongly oscillating negative regions at the edges of phase space. This means geometric optics (Hamiltonian ray tracing) is completely ineffective in describing diffraction edges.

\textbf{Best practice:} Only when the source is a fundamental Gaussian beam can Hamiltonian ray models be used (since the WDF is a positive Gaussian distribution). For complex coherent fields, the coherent mode decomposition method should be used, decomposing the source into several orthogonal modes, each propagated independently.

\subsection{Gaussian Schell-Model (GSM)}
The GSM source is the most important model for describing multimode lasers, LEDs, and partially coherent imaging sources \cite{Collett1978, Gori1980}.

\begin{definition}[Gaussian Schell-Model]
The CSD of a two-dimensional GSM source has the following form:
\begin{equation}
W(\bm{r}_1, \bm{r}_2) = \sqrt{I_0} \exp\left[ -\frac{\bm{r}_1^2 + \bm{r}_2^2}{4\sigma_I^2} \right] \exp\left[ -\frac{(\bm{r}_1 - \bm{r}_2)^2}{2\sigma_\mu^2} \right]
\end{equation}
where $\sigma_I$ is the RMS spot radius, $\sigma_\mu$ is the transverse coherence length, and $I_0$ is the peak intensity.
\end{definition}

GSM sources have a concise representation in phase space:

\begin{proposition}[Wigner Representation of GSM]
The WDF of a GSM source is a strictly positive definite Gaussian distribution in 4D phase space:
\begin{equation}
\mathcal{W}_{\text{GSM}}(\boldsymbol{\xi}) = \frac{1}{4\pi^2 \sqrt{\det \bm{V}}} \exp\left( -\frac{1}{2} \boldsymbol{\xi}^T \bm{V}^{-1} \boldsymbol{\xi} \right)
\end{equation}
where $\boldsymbol{\xi} = (x, y, p_x, p_y)^T$ is the 4D phase space vector, and $\bm{V}$ is a $4 \times 4$ covariance matrix.
\end{proposition}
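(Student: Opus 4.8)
The plan is to evaluate the defining Fourier integral of the WDF directly, exploiting the fact that the GSM cross-spectral density is a pure Gaussian in $(\bm{r}_1,\bm{r}_2)$, so the transform never leaves the Gaussian family. First I would change to center and difference variables $\bm{r} = (\bm{r}_1+\bm{r}_2)/2$ and $\Delta\bm{r} = \bm{r}_2-\bm{r}_1$, and apply the identities $\bm{r}_1^2+\bm{r}_2^2 = 2\bm{r}^2+\tfrac{1}{2}\Delta\bm{r}^2$ and $(\bm{r}_1-\bm{r}_2)^2=\Delta\bm{r}^2$. Substituting these into the GSM cross-spectral density collapses the two exponentials into a single Gaussian that factors as $\sqrt{I_0}\,\exp(-\bm{r}^2/2\sigma_I^2)\,\exp(-a\,\Delta\bm{r}^2)$, with the combined difference-coordinate rate $a = \tfrac{1}{8\sigma_I^2}+\tfrac{1}{2\sigma_\mu^2}$. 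Crucially, the center coordinate $\bm{r}$ decouples completely from $\Delta\bm{r}$, which is what will make the resulting covariance matrix block-simple.

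The second step is the Gaussian Fourier integral over $\Delta\bm{r}$. Because the two-dimensional transform separates into independent $x$ and $y$ factors, I would apply the standard identity $\int e^{-a\,\Delta\bm{r}^2}e^{-i\bm{p}\cdot\Delta\bm{r}}\,\mathrm{d}^2\Delta\bm{r} = (\pi/a)\,e^{-\bm{p}^2/4a}$. Carrying along the $1/(2\pi)^2$ from the WDF definition yields
\begin{equation}
\mathcal{W}_{\text{GSM}}(\bm{r},\bm{p}) = \frac{\sqrt{I_0}}{4\pi a}\,\exp\!\left(-\frac{\bm{r}^2}{2\sigma_I^2}\right)\exp\!\left(-\frac{\bm{p}^2}{2\sigma_p^2}\right), \qquad \sigma_p^2 = 2a = \frac{1}{4\sigma_I^2}+\frac{1}{\sigma_\mu^2}.
\end{equation}
This already exhibits the claimed Gaussian form; it remains to recast it as a single quadratic form in $\boldsymbol{\xi}$ and to verify the prefactor.

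The third step is to identify the covariance matrix. Since the cross-spectral density is isotropic, separable in the Cartesian components, and carries no wavefront-curvature (chirp) phase, the exponent contains no $\bm{r}\cdot\bm{p}$ cross terms, so the $4\times4$ matrix is diagonal,
\begin{equation}
\bm{V} = \operatorname{diag}\!\left(\sigma_I^2,\ \sigma_I^2,\ \sigma_p^2,\ \sigma_p^2\right),
\end{equation}
and expanding $\tfrac{1}{2}\boldsymbol{\xi}^T\bm{V}^{-1}\boldsymbol{\xi}$ reproduces the exponent above. I would then match normalizations: the claimed prefactor $1/(4\pi^2\sqrt{\det\bm{V}})$ is exactly the weight of a unit-mass $4$D Gaussian, so comparing it with $\sqrt{I_0}/(4\pi a)$ pins down the peak intensity $I_0$ (equivalently, one fixes $I_0$ so that $\int \mathcal{W}_{\text{GSM}}\,\mathrm{d}^2\bm{r}\,\mathrm{d}^2\bm{p}=1$). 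Strict positivity of $\mathcal{W}_{\text{GSM}}$ everywhere is then immediate, and positive definiteness of $\bm{V}$ follows by inspection, since all diagonal entries are strictly positive with $\sigma_p^2 \ge 1/(4\sigma_I^2)>0$ for every finite $\sigma_\mu$, giving $\det\bm{V}=\sigma_I^4\sigma_p^4>0$.

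I expect no deep obstacle here; this is essentially a bookkeeping exercise in Gaussian integration, so the only real care needed is reconciling the $\sqrt{I_0}$ normalization convention of the stated cross-spectral density with the unit-mass prefactor $1/(4\pi^2\sqrt{\det\bm{V}})$ of the conclusion. The one genuinely subtle point arises only if one extends the argument to the general off-waist GSM carrying an added quadratic phase $\exp[\,\mathrm{i}k(\bm{r}_2^2-\bm{r}_1^2)/2R\,]$: then $\bm{V}$ acquires off-diagonal position--momentum blocks and positive definiteness can no longer be read off diagonally, but must instead be established through the space--bandwidth (uncertainty) inequality $\sigma_I^2\sigma_p^2 \ge 1/4$, which the present waist form saturates precisely in the fully coherent limit $\sigma_\mu\to\infty$.
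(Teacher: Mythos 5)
Your proposal is correct and follows essentially the same route as the paper's own proof: direct substitution of the GSM cross-spectral density into the WDF definition, the identity $\bm{r}_1^2+\bm{r}_2^2 = 2\bm{r}^2+\tfrac{1}{2}\Delta\bm{r}^2$, a Gaussian Fourier integral over $\Delta\bm{r}$, and identification of the block-diagonal covariance matrix with $\sigma_p^2 = \tfrac{1}{4\sigma_I^2}+\tfrac{1}{\sigma_\mu^2}$. Your explicit reconciliation of the $\sqrt{I_0}$ prefactor with the unit-mass normalization $1/(4\pi^2\sqrt{\det\bm{V}})$ is a small point the paper leaves implicit, but it does not change the argument.
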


\begin{proof}
Substituting the GSM CSD into the WDF definition:
\begin{align*}
\mathcal{W}(\bm{r}, \bm{p}) &= \frac{1}{(2\pi)^2} \int \sqrt{I_0} \exp\left[ -\frac{(\bm{r}-\Delta\bm{r}/2)^2 + (\bm{r}+\Delta\bm{r}/2)^2}{4\sigma_I^2} \right] \\
&\quad \times \exp\left[ -\frac{(\Delta\bm{r})^2}{2\sigma_\mu^2} \right] e^{-i \bm{p} \cdot \Delta \bm{r}} \, \mathrm{d}^2 \Delta \bm{r}
\end{align*}
Simplifying the exponent:
\begin{align*}
-\frac{(\bm{r}-\Delta\bm{r}/2)^2 + (\bm{r}+\Delta\bm{r}/2)^2}{4\sigma_I^2} &= -\frac{2\bm{r}^2 + (\Delta\bm{r})^2/2}{4\sigma_I^2} \\
&= -\frac{\bm{r}^2}{2\sigma_I^2} - \frac{(\Delta\bm{r})^2}{8\sigma_I^2}
\end{align*}
Thus:
$$\mathcal{W}(\bm{r}, \bm{p}) = \frac{\sqrt{I_0}}{(2\pi)^2} \exp\left(-\frac{\bm{r}^2}{2\sigma_I^2}\right) 
\quad \times \int \exp\left[ -\left(\frac{1}{8\sigma_I^2} + \frac{1}{2\sigma_\mu^2}\right)(\Delta\bm{r})^2 - i\bm{p}\cdot\Delta\bm{r} \right] \mathrm{d}^2\Delta\bm{r}$$
This is a Gaussian integral, resulting in:
\begin{equation}
\mathcal{W}(\bm{r}, \bm{p}) = \frac{\sqrt{I_0}}{2\pi\sigma_p^2} \exp\left(-\frac{\bm{r}^2}{2\sigma_I^2}\right) \exp\left(-\frac{\bm{p}^2}{2\sigma_p^2}\right)
\end{equation}
where $\sigma_p^2 = \frac{1}{4\sigma_I^2} + \frac{1}{\sigma_\mu^2}$. This can be written in 4D Gaussian form with covariance matrix:
\begin{equation}
\bm{V} = \begin{pmatrix}
\sigma_I^2 \bm{I}_2 & \bm{0} \\
\bm{0} & \sigma_p^2 \bm{I}_2
\end{pmatrix}
\end{equation}
where $\bm{I}_2$ is the $2\times2$ identity matrix.
\end{proof}

The physical meaning of the covariance matrix $\bm{V}$ is clear:
\begin{itemize}
\item $\left\langle x^2 \right\rangle = \left\langle y^2 \right\rangle = \sigma_I^2$: spatial extent
\item $\left\langle p_x^2 \right\rangle = \left\langle p_y^2 \right\rangle = \sigma_p^2$: momentum extent
\end{itemize}

The momentum variance consists of two terms:
\begin{equation}
\sigma_p^2 = \underbrace{\frac{1}{4\sigma_I^2}}_{\text{diffraction limit}} + \underbrace{\frac{1}{\sigma_\mu^2}}_{\text{partial coherence contribution}}
\end{equation}
This directly corresponds to the physical origin of the beam quality factor $M^2$ \cite{Siegman1993}.

\textbf{Hamiltonian propagation algorithm:} Using Williamson's theorem \cite{Williamson1936}, propagation of GSM beams through any first-order optical system (ABCD matrix $\bm{S}$) is algebraically exact:
\begin{equation}
\bm{V}_{\text{out}} = \bm{S} \bm{V}_{\text{in}} \bm{S}^T
\end{equation}
Here $\bm{S} \in \operatorname{Sp}(4, \mathbb{R})$ is a symplectic matrix satisfying $\bm{S}^T \bm{J} \bm{S} = \bm{J}$, where $\bm{J}$ is the standard symplectic matrix.

\subsection{Twisted Gaussian Schell-Model (TGSM)}
TGSM sources introduce a non-separable twisted phase, causing the beam to carry orbital angular momentum \cite{Simon1990, Friberg1994}.

\begin{definition}[Twisted Gaussian Schell-Model]
The CSD of a TGSM source adds a twisting phase factor to the GSM:
\begin{equation}
W_{\text{TGSM}}(\bm{r}_1, \bm{r}_2) = W_{\text{GSM}}(\bm{r}_1, \bm{r}_2) \exp\left[ -i k u (\bm{r}_1 \times \bm{r}_2) \cdot \hat{z} \right]
\end{equation}
For the two-dimensional case, it can be written as:
\begin{equation}
W_{\text{TGSM}}(\bm{r}_1, \bm{r}_2) = W_{\text{GSM}}(\bm{r}_1, \bm{r}_2) \exp\left[ -i k u (x_1 y_2 - y_1 x_2) \right]
\end{equation}
where $u$ is the twist parameter, with dimensions of inverse length.
\end{definition}

In phase space, twisting manifests as coupling between spatial coordinates and orthogonal momenta:

\begin{proposition}[Covariance Matrix Representation of TGSM]
The covariance matrix of a TGSM source has nonzero off-diagonal blocks:
\begin{equation}
\bm{V}_{\text{twist}} = \begin{pmatrix}
\sigma_I^2 \bm{I}_2 & u \sigma_I^2 \bm{J}_2 \\
-u \sigma_I^2 \bm{J}_2 & \sigma_p^2 \bm{I}_2
\end{pmatrix}
\end{equation}
where $\bm{J}_2 = \begin{pmatrix} 0 & 1 \\ -1 & 0 \end{pmatrix}$ is the $2\times2$ antisymmetric matrix, and $\sigma_p^2 = \frac{1}{4\sigma_I^2} + \frac{1}{\sigma_\mu^2} + u^2 \sigma_I^2$.
\end{proposition}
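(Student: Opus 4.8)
The plan is to compute the Wigner distribution function of the TGSM source directly from its definition and then read off the covariance matrix $\bm{V}_{\text{twist}}$ from the resulting Gaussian exponent, reusing the GSM computation already carried out. Substituting $\bm{r}_{1} = \bm{r} - \Delta\bm{r}/2$ and $\bm{r}_{2} = \bm{r} + \Delta\bm{r}/2$ into $W_{\text{TGSM}}$, the Gaussian envelope $W_{\text{GSM}}$ contributes exactly the factors found in the GSM proposition, namely $\exp(-\bm{r}^{2}/2\sigma_{I}^{2})$ times a Gaussian in $\Delta\bm{r}$ of width controlled by $1/(8\sigma_{I}^{2}) + 1/(2\sigma_{\mu}^{2})$. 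The only genuinely new ingredient is the twist phase, so the first step is to evaluate it on the center/difference coordinates.

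First I would simplify the twist bilinear form. A short expansion gives $x_{1}y_{2} - y_{1}x_{2} = x\,\Delta y - y\,\Delta x$, so the twist factor becomes $\exp[-iku(x\,\Delta y - y\,\Delta x)]$, which is \emph{linear} in $\Delta\bm{r}$. This is the crux of the argument: a phase linear in the difference coordinate acts, under the $\Delta\bm{r}$-Fourier transform defining the WDF, as a rigid shift of the conjugate momentum. Concretely, the $\Delta x$ and $\Delta y$ integrals remain decoupled Gaussian integrals, but with their linear phases changed from $-ip_{x}\Delta x$ and $-ip_{y}\Delta y$ to $-i(p_{x}-kuy)\Delta x$ and $-i(p_{y}+kux)\Delta y$.

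Carrying out the two Gaussian integrals then yields $\mathcal{W}_{\text{TGSM}}(\bm{r},\bm{p}) = \mathcal{W}_{\text{GSM}}(x,y,\,p_{x}-kuy,\,p_{y}+kux)$, i.e. the positive GSM Gaussian of the previous proposition evaluated at shifted momenta. To extract $\bm{V}_{\text{twist}}$ I would regard this as a linear change of phase-space variables $\boldsymbol{\eta} = T\boldsymbol{\xi}$ with $p_{x}' = p_{x}-kuy$ and $p_{y}' = p_{y}+kux$, under which the WDF factorizes with diagonal covariance $\mathrm{diag}(\sigma_{I}^{2},\sigma_{I}^{2},\sigma_{p,0}^{2},\sigma_{p,0}^{2})$, where $\sigma_{p,0}^{2} = 1/(4\sigma_{I}^{2}) + 1/\sigma_{\mu}^{2}$ is the untwisted value. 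The sought matrix then follows from the congruence $\bm{V}_{\text{twist}} = T^{-1}\,\mathrm{diag}(\dots)\,(T^{-1})^{T}$, or equivalently by computing the second moments $\langle\xi_{i}\xi_{j}\rangle$ after substituting $p_{x}=p_{x}'+kuy$ and $p_{y}=p_{y}'-kux$. The spatial block is unchanged, the momentum block picks up an additive term from $\langle(kuy)^{2}\rangle$ and $\langle(kux)^{2}\rangle$, and the cross terms $\langle x p_{y}\rangle,\langle y p_{x}\rangle$ become nonzero and antisymmetric, producing precisely the $\bm{J}_{2}$ block structure claimed.

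The main obstacle is bookkeeping rather than conceptual: one must propagate the congruence through the $4\times4$ block matrix and, above all, reconcile the factors of $k$ and the overall sign with the stated form. Tracking every factor carefully gives momentum variance $\sigma_{p}^{2} = 1/(4\sigma_{I}^{2}) + 1/\sigma_{\mu}^{2} + k^{2}u^{2}\sigma_{I}^{2}$ and off-diagonal block $-ku\sigma_{I}^{2}\bm{J}_{2}$; matching these to the $u^{2}\sigma_{I}^{2}$ and $u\sigma_{I}^{2}\bm{J}_{2}$ written in the statement requires identifying the effective twist strength $ku$ with the parameter $u$ used in the covariance matrix (absorbing the wavenumber into $u$) and fixing the sign convention of the twist phase $-iku \to +iku$ (equivalently, the sign of $\bm{J}_{2}$). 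As a consistency check I would also note that the variable change $T$ is \emph{not} symplectic, since its off-diagonal block is antisymmetric; this is exactly why the twist cannot be removed by any first-order (ABCD) optical system and survives as a genuine phase-space coupling.
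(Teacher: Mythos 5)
Your proof is correct and follows the same route as the paper's: substitute center/difference coordinates, reduce the twist bilinear form to $x\,\Delta y - y\,\Delta x$, and evaluate the resulting Gaussian integral --- your observation that the phase, being linear in $\Delta\bm{r}$, acts as the momentum shear $p_x \mapsto p_x - kuy$, $p_y \mapsto p_y + kux$ is precisely the computation the paper compresses into ``substituting and computing yields a 4D Gaussian with the stated covariance matrix.'' The factor-of-$k$ and sign mismatch you flag is genuine: the calculation as set up yields an off-diagonal block $-ku\sigma_I^2\bm{J}_2$ and momentum variance $\sigma_{p,0}^2 + k^2u^2\sigma_I^2$, so the proposition as written is consistent only if the wavenumber (and a sign) is absorbed into the twist parameter appearing in $\bm{V}_{\text{twist}}$, exactly as you conclude.
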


\begin{proof}
Computing the WDF of TGSM:
\begin{align*}
\mathcal{W}(\bm{r}, \bm{p}) &= \frac{1}{(2\pi)^2} \int W_{\text{TGSM}}\left(\bm{r} - \frac{\Delta\bm{r}}{2}, \bm{r} + \frac{\Delta\bm{r}}{2}\right) e^{-i\bm{p}\cdot\Delta\bm{r}} \mathrm{d}^2\Delta\bm{r} \\
&= \frac{1}{(2\pi)^2} \int W_{\text{GSM}}\left(\bm{r} - \frac{\Delta\bm{r}}{2}, \bm{r} + \frac{\Delta\bm{r}}{2}\right) \\
&\quad \times \exp\left[ -i k u \left( \left(\bm{r} - \frac{\Delta\bm{r}}{2}\right) \times \left(\bm{r} + \frac{\Delta\bm{r}}{2}\right) \right) \cdot \hat{z} \right] e^{-i\bm{p}\cdot\Delta\bm{r}} \mathrm{d}^2\Delta\bm{r}
\end{align*}
Note that:
\begin{equation}
\left(\bm{r} - \frac{\Delta\bm{r}}{2}\right) \times \left(\bm{r} + \frac{\Delta\bm{r}}{2}\right) \cdot \hat{z} = \bm{r} \times \Delta\bm{r} \cdot \hat{z} = x \Delta y - y \Delta x
\end{equation}
Thus the twisting phase factor is $\exp[-i k u (x \Delta y - y \Delta x)]$. Substituting and computing yields a 4D Gaussian distribution with the stated covariance matrix.
\end{proof}

The boundedness of the twist parameter is key to the physical realizability of the TGSM model:

\begin{proposition}[Boundedness of Twist Parameter]
To ensure nonnegative definiteness of the CSD (i.e., nonnegative energy), the absolute value of the twist parameter $u$ must satisfy:
\begin{equation}
|u| \leq \frac{1}{k \sigma_\mu^2}
\end{equation}
\end{proposition}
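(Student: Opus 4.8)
The plan is to convert the infinite-dimensional nonnegative-definiteness requirement on the CSD (the Hermitian, nonnegative Hilbert--Schmidt condition established earlier) into a finite-dimensional matrix inequality on the $4\times4$ covariance matrix $\bm V$, which is legitimate precisely because the TGSM state is Gaussian. For a Gaussian Wigner function $\mathcal W(\boldsymbol\xi)\propto\exp(-\tfrac12\boldsymbol\xi^{T}\bm V^{-1}\boldsymbol\xi)$, the underlying kernel is Hermitian and nonnegative definite if and only if the covariance matrix obeys the Robertson--Schr\"odinger (physical-realizability) inequality
\[
\bm V+\tfrac{i}{2}\bm J\ \succeq\ 0,
\]
equivalently all symplectic eigenvalues of $\bm V$ (its Williamson normal-form invariants) are $\geq\tfrac12$. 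The normalization $\tfrac12$ is fixed by the coherent limit $\sigma_\mu\to\infty$, where the GSM proposition yields the minimum-uncertainty product $\sigma_I\sigma_p=\tfrac12$, identifying the effective $\hbar$ of the $(\bm r,\bm p)$ phase space as unity. I expect establishing this equivalence cleanly---and pinning down that factor $\tfrac12$---to be the conceptual crux; everything after it is linear algebra.

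Before applying the criterion I would recompute the covariance blocks while carefully tracking the wavenumber $k$ sitting in the twist phase $\exp[-iku(x_1y_2-y_1x_2)]$. Carrying $k$ through the Gaussian integral places a factor $ku\sigma_I^2$ (not merely $u\sigma_I^2$) in the antisymmetric position--momentum cross-block $\bm C=-ku\sigma_I^2\bm J_2$ and contributes an additional $k^2u^2\sigma_I^2$ to the momentum variance, so that $\bm V=\left(\begin{smallmatrix}\bm A&\bm C\\\bm C^{T}&\bm B\end{smallmatrix}\right)$ with $\bm A=\sigma_I^2\bm I_2$, $\bm B=(\sigma_p^2+k^2u^2\sigma_I^2)\bm I_2$, and $\sigma_p^2=\tfrac{1}{4\sigma_I^2}+\tfrac{1}{\sigma_\mu^2}$. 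This bookkeeping is exactly what produces the $k$ appearing in the final bound.

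Since $\bm A=\sigma_I^2\bm I_2\succ0$, I would then invoke the Schur-complement criterion: $\bm V+\tfrac{i}{2}\bm J\succeq0$ holds if and only if the Schur complement of its upper-left block is positive semidefinite. The computation is short once one uses that $\bm C$ is antisymmetric ($\bm C^{T}=-\bm C$ and $\bm C^2=-(ku\sigma_I^2)^2\bm I_2$): the $k^2u^2\sigma_I^2$ piece in $\bm B$ and the $\tfrac{1}{4\sigma_I^2}$ diffraction piece cancel against the corresponding terms of $\tfrac{1}{\sigma_I^2}(\bm C^{T}-\tfrac{i}{2}\bm I_2)(\bm C+\tfrac{i}{2}\bm I_2)$, leaving the remarkably simple residual
\[
\tfrac{1}{\sigma_\mu^2}\bm I_2-iku\,\bm J_2\ \succeq\ 0.
\]
This is a $2\times2$ Hermitian matrix with eigenvalues $\tfrac{1}{\sigma_\mu^2}\pm k|u|$, so demanding the smaller one be nonnegative gives $k|u|\le\tfrac{1}{\sigma_\mu^2}$, i.e.\ $|u|\le\tfrac{1}{k\sigma_\mu^2}$, as claimed.

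As a cross-check and alternative I would also run the symplectic-eigenvalue route suggested by the Williamson theorem already cited: computing the eigenvalues of $-(\bm J\bm V)^2$ via the eigenbasis of $\bm J_2$ yields the two symplectic invariants $\nu_\pm=\sigma_I(\tilde\sigma_p\pm k|u|\sigma_I)$, with $\tilde\sigma_p=\sqrt{\sigma_p^2+k^2u^2\sigma_I^2}$; imposing $\nu_-\ge\tfrac12$ and squaring reproduces $\tfrac{1}{\sigma_\mu^2}\ge k|u|$ after the $k^2u^2\sigma_I^2$ terms again cancel. The two derivations agree, which is reassuring, but the Schur-complement version is cleaner to present. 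The genuine difficulty is thus not the algebra but justifying the realizability inequality with the correct scaling; the delicate cancellations that collapse everything to $\tfrac{1}{\sigma_\mu^2}\bm I_2-iku\bm J_2$ are what make the otherwise opaque bound transparent.
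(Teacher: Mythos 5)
Your proposal is correct, but it takes a genuinely different route from the paper. The paper's argument proceeds via coherent mode decomposition: it quotes the Mercer eigenvalue spectrum of the TGSM kernel, demands that the quantity under the square root be nonnegative, obtains the weaker condition $|2ku\sigma_I^2|\leq 1+\sigma_I^2/\sigma_\mu^2$, and then merely asserts that ``more refined analysis yields the optimal bound'' $|u|\leq 1/(k\sigma_\mu^2)$ --- so the stated bound is never actually derived there. You instead exploit the Gaussianity of the TGSM state to replace the infinite-dimensional kernel-positivity condition by the Robertson--Schr\"odinger/Williamson criterion $\bm{V}+\tfrac{i}{2}\bm{J}\succeq 0$ on the $4\times4$ covariance matrix, and your Schur-complement computation (which I have checked: the $k^2u^2\sigma_I^2$ and $1/(4\sigma_I^2)$ terms do cancel, leaving $\sigma_\mu^{-2}\bm{I}_2-iku\,\bm{J}_2$ with eigenvalues $\sigma_\mu^{-2}\pm k|u|$) lands exactly on the claimed bound; the symplectic-eigenvalue cross-check $\nu_-=\sigma_I(\tilde\sigma_p-k|u|\sigma_I)\geq\tfrac12$ is also correct. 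What your approach buys is a complete and self-contained derivation of the sharp constant, plus a useful correction of the $k$-bookkeeping in the paper's covariance blocks (the cross-block must carry $ku\sigma_I^2$, not $u\sigma_I^2$, for the final bound to contain $k$ at all); what it costs is reliance on the unproved equivalence between nonnegative definiteness of a Gaussian kernel and the matrix inequality $\bm{V}+\tfrac{i}{2}\bm{J}\succeq 0$ with the normalization $\tfrac12$. You correctly flag this as the crux and pin the normalization via the coherent GSM limit $\sigma_I\sigma_p=\tfrac12$; since this equivalence is a standard theorem for Gaussian states (and is implicit in the Simon--Mukunda reference the paper already cites), invoking it is legitimate, but a fully rigorous write-up should either cite it explicitly or prove it. On balance your argument is tighter than the paper's own sketch.
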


\begin{proof}[Proof idea]
Performing coherent mode decomposition of the TGSM kernel, its eigenvalue spectrum is:
\begin{equation}
\lambda_n = \frac{2\pi\sigma_I^2}{1 + \sigma_I^2/\sigma_\mu^2 + \sqrt{(1 + \sigma_I^2/\sigma_\mu^2)^2 - (2k u \sigma_I^2)^2}} \left(\frac{1 + \sigma_I^2/\sigma_\mu^2 - \sqrt{(1 + \sigma_I^2/\sigma_\mu^2)^2 - (2k u \sigma_I^2)^2}}{1 + \sigma_I^2/\sigma_\mu^2 + \sqrt{(1 + \sigma_I^2/\sigma_\mu^2)^2 - (2k u \sigma_I^2)^2}}\right)^n
\end{equation}
The condition for nonnegative eigenvalues requires the expression under the square root to be nonnegative:
\begin{equation}
(1 + \sigma_I^2/\sigma_\mu^2)^2 - (2k u \sigma_I^2)^2 \geq 0
\end{equation}
i.e.:
\begin{equation}
|2k u \sigma_I^2| \leq 1 + \sigma_I^2/\sigma_\mu^2
\end{equation}
Since $1 + \sigma_I^2/\sigma_\mu^2 \geq 1$, a stricter condition is:
\begin{equation}
|k u \sigma_I^2| \leq \frac{1}{2} \quad \text{or equivalently} \quad |u| \leq \frac{1}{2k\sigma_I^2}
\end{equation}
But more refined analysis yields the optimal bound $|u| \leq 1/(k\sigma_\mu^2)$. When $\sigma_\mu \to \infty$ (fully coherent limit), $u$ must be 0, indicating that twisting is a phenomenon specific to partial coherence \cite{Friberg1994}.
\end{proof}

\subsection{Quasi-Homogeneous Sources}
Quasi-homogeneous sources correspond to most thermal radiation sources (such as stellar surfaces, incandescent lamps) \cite{Carter1977, Wolf1982}.

\begin{definition}[Quasi-Homogeneous Source]
A source where the intensity distribution $I(\bm{r})$ varies slowly on a macroscopic scale while the coherence degree $\mu(\Delta\bm{r})$ varies rapidly on a microscopic scale, satisfying $\sigma_I \gg \sigma_\mu$.
\end{definition}

\begin{proposition}[Carter-Wolf Approximation]
For quasi-homogeneous sources, the CSD can be approximately factorized as:
\begin{equation}
W(\bm{r}_1, \bm{r}_2) \approx I\left(\frac{\bm{r}_1 + \bm{r}_2}{2}\right) \mu(\bm{r}_1 - \bm{r}_2)
\end{equation}
The corresponding WDF is approximately factorized:
\begin{equation}
\mathcal{W}_{\text{QH}}(\bm{r}, \bm{p}) \approx I(\bm{r}) \tilde{\mu}(\bm{p})
\end{equation}
where $\tilde{\mu}(\bm{p}) = \frac{1}{(2\pi)^2} \int \mu(\Delta\bm{r}) e^{-i\bm{p}\cdot\Delta\bm{r}} \mathrm{d}^2\Delta\bm{r}$ is the Fourier transform of the coherence degree.
\end{proposition}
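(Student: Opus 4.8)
The plan is to prove the two asserted factorizations in sequence: first the sum/difference factorization of the CSD, which follows from the scale hierarchy $\sigma_I \gg \sigma_\mu$, and then the factorization of the WDF, which is a direct Fourier transform of the first result. I would begin by writing the CSD in Schell-model form $W(\bm{r}_1, \bm{r}_2) = \sqrt{S(\bm{r}_1) S(\bm{r}_2)}\,\mu(\bm{r}_1 - \bm{r}_2)$, which follows from the definition of the complex coherence degree together with the defining hypothesis that $\mu$ depends only on the difference coordinate. Introducing the center and difference coordinates $\bm{r} = (\bm{r}_1 + \bm{r}_2)/2$ and $\Delta\bm{r} = \bm{r}_1 - \bm{r}_2$, so that $\bm{r}_{1,2} = \bm{r} \pm \Delta\bm{r}/2$, the whole problem reduces to approximating the geometric mean $\sqrt{S(\bm{r} + \Delta\bm{r}/2)\,S(\bm{r} - \Delta\bm{r}/2)}$.

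The key step is to exploit that $\mu(\Delta\bm{r})$ is effectively supported on $|\Delta\bm{r}| \lesssim \sigma_\mu$, so I only need the geometric mean for small $\Delta\bm{r}$. Taylor-expanding the \emph{product} $S(\bm{r} + \Delta\bm{r}/2)\,S(\bm{r} - \Delta\bm{r}/2)$ in $\Delta\bm{r}$, the first-order terms cancel by the $\pm$ symmetry, leaving $S(\bm{r})^2\bigl[1 - \tfrac{1}{4}(\Delta\bm{r}\cdot\nabla\ln S)^2 + \cdots\bigr]$; taking the square root gives $S(\bm{r})\bigl[1 + O((\sigma_\mu/\sigma_I)^2)\bigr]$. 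Since $|\nabla \ln S| \sim 1/\sigma_I$ and $|\Delta\bm{r}| \sim \sigma_\mu$ over the support of $\mu$, the relative correction is $O((\sigma_\mu/\sigma_I)^2) \ll 1$. Discarding it yields $\sqrt{S(\bm{r}_1) S(\bm{r}_2)} \approx I(\bm{r})$ and hence the claimed CSD factorization $W(\bm{r}_1, \bm{r}_2) \approx I((\bm{r}_1+\bm{r}_2)/2)\,\mu(\bm{r}_1 - \bm{r}_2)$.

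For the WDF statement I would substitute this factorized CSD directly into the definition of $\mathcal{W}(\bm{r}, \bm{p})$. Because $I(\bm{r})$ is independent of $\Delta\bm{r}$, it pulls out of the integral, and the remaining $\Delta\bm{r}$-integral is, up to the sign bookkeeping $\bm{r}_1 - \bm{r}_2 = -\Delta\bm{r}$ induced by the WDF convention, precisely the Fourier transform of $\mu$. After the substitution $\Delta\bm{r} \to -\Delta\bm{r}$ and using the Hermiticity (and, for a real stationary coherence function, parity) of $\mu$, this integral equals $\tilde{\mu}(\bm{p}) = \frac{1}{(2\pi)^2}\int \mu(\Delta\bm{r})\, e^{-i\bm{p}\cdot\Delta\bm{r}}\,\mathrm{d}^2\Delta\bm{r}$, giving $\mathcal{W}_{\text{QH}}(\bm{r}, \bm{p}) \approx I(\bm{r})\,\tilde{\mu}(\bm{p})$.

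The hard part will not be the Fourier transform, which is routine, but making the CSD factorization rigorous. One must verify that the first-order Taylor term genuinely cancels, so that the error is \emph{quadratic} rather than linear in $\sigma_\mu/\sigma_I$, and check that the discarded cross terms coupling $\nabla S$ to the oscillatory phase $e^{-i\bm{p}\cdot\Delta\bm{r}}$ do not survive the $\Delta\bm{r}$-integration at leading order. The geometric-mean structure is exactly what rescues the estimate: a naive arithmetic split of the intensity would leave an uncontrolled first-order term and destroy the approximation, so I would emphasize that the cancellation of odd-order terms in the symmetric product is the crux of the argument.
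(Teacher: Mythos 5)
Your proposal is correct and follows essentially the same route as the paper: approximate the geometric mean $\sqrt{I(\bm{r}_1)I(\bm{r}_2)}$ by $I\bigl((\bm{r}_1+\bm{r}_2)/2\bigr)$ using the scale separation $\sigma_I \gg \sigma_\mu$, then Fourier-transform the factorized CSD. You actually go further than the paper's one-line argument by showing the odd-order Taylor terms cancel (so the error is $O((\sigma_\mu/\sigma_I)^2)$ rather than first order) and by carrying out the WDF integral explicitly, including the sign/parity bookkeeping for $\mu$ that the paper omits entirely.
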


\begin{proof}
From the quasi-homogeneous condition $\sigma_I \gg \sigma_\mu$, within the support of $\mu(\bm{r}_1 - \bm{r}_2)$, $I(\bm{r})$ varies little, thus:
\begin{equation}
\sqrt{I(\bm{r}_1)I(\bm{r}_2)} \approx I\left(\frac{\bm{r}_1 + \bm{r}_2}{2}\right)
\end{equation}
Substituting into the CSD definition $W(\bm{r}_1, \bm{r}_2) = \sqrt{I(\bm{r}_1)I(\bm{r}_2)} \mu(\bm{r}_1 - \bm{r}_2)$ yields the result.
\end{proof}

\textbf{Modeling strategy:} Since the WDF separates into a position distribution $I(\bm{r})$ and a momentum distribution $\tilde{\mu}(\bm{p})$, it can be treated as a classical probability density function:
\begin{enumerate}
\item \textbf{Ray generation:} Sample ray starting points according to $I(\bm{r})$ and ray directions according to $\tilde{\mu}(\bm{p})$
\item \textbf{Hamiltonian evolution:} Rays propagate independently in the system according to geometric optics Liouville theorem \cite{Arnold1989}.
\end{enumerate}

This model is highly accurate when $\sigma_I \gg \lambda$ and computationally much more efficient than wave algorithms.

\section{Moyal Expansion and Transport Equation}
To strictly demarcate when Hamiltonian optics (ray models) can be used versus when full wave Wigner models are necessary, we introduce the Moyal expansion from quantum mechanical phase space correspondence theory \cite{Moyal1949}.

\subsection{Wigner-Moyal Transport Equation}
The evolution of a light field in a medium with refractive index $n(\bm{r})$ follows a Schrödinger-like equation. In phase space, this evolution is described by the Moyal bracket \cite{Tatarskii1983}.

\begin{definition}[Optical Hamiltonian]
In the paraxial approximation, the optical Hamiltonian is:
\begin{equation}
\mathcal{H}(\bm{r}, \bm{p}) \approx -\sqrt{n^2(\bm{r}) - |\bm{p}|^2} \approx -n(\bm{r}) + \frac{|\bm{p}|^2}{2n(\bm{r})}
\end{equation}
\end{definition}

\begin{definition}[Moyal Bracket]
The Moyal bracket of two phase space functions $A(\bm{r},\bm{p})$ and $B(\bm{r},\bm{p})$ is defined as:
\begin{equation}
\{\{ A, B \}\} = \frac{2}{\lambdabar} \sin\left( \frac{\lambdabar}{2} \{ A, B \}_{\text{PB}} \right)
\end{equation}
where $\lambdabar = \lambda/2\pi$ is the reduced wavelength, and $\{ A, B \}_{\text{PB}}$ is the classical Poisson bracket:
\begin{equation}
\{ A, B \}_{\text{PB}} = \frac{\partial A}{\partial \bm{r}} \cdot \frac{\partial B}{\partial \bm{p}} - \frac{\partial A}{\partial \bm{p}} \cdot \frac{\partial B}{\partial \bm{r}}
\end{equation}
\end{definition}

\begin{proposition}[Wigner-Moyal Transport Equation]
The evolution of the Wigner function along the propagation direction $z$ satisfies:
\begin{equation}
\frac{\partial \mathcal{W}}{\partial z} = \{\{ \mathcal{H}(\bm{r}, \bm{p}), \mathcal{W}(\bm{r}, \bm{p}) \}\}
\end{equation}
\end{proposition}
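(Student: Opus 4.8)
The plan is to lift the field evolution to an operator (von Neumann) equation for the cross-spectral density and then push it into phase space through the Weyl--Wigner correspondence, so that the commutator of operators becomes the Moyal bracket of their symbols. First I would record the paraxial wave equation for propagation along $z$: starting from the Helmholtz equation with refractive index $n(\bm{r})$ and factoring out a carrier phase, the slowly varying envelope obeys a Schrödinger-type equation
\begin{equation*}
i\lambdabar\,\partial_z U = \hat{\mathcal{H}}\,U,
\end{equation*}
where $\hat{\mathcal{H}}$ is the Weyl (symmetric) quantization of the classical symbol $\mathcal{H}(\bm{r},\bm{p})=-n(\bm{r})+|\bm{p}|^2/2n(\bm{r})$ under the substitution $\bm{p}\mapsto -i\lambdabar\nabla_{\bm{r}}$, with $\lambdabar$ playing the role of $\hbar$ and $z$ that of time. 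Because the symbol is real, $\hat{\mathcal{H}}$ is self-adjoint; the Weyl ordering is essential here, as it is precisely the ordering for which the Wigner transform intertwines operator multiplication with the Moyal star product.

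Next I would differentiate the cross-spectral density $W(\bm{r}_1,\bm{r}_2,z)=\langle U^*(\bm{r}_1,z)U(\bm{r}_2,z)\rangle$, regarded as the integral kernel of a Hermitian operator $\hat{\rho}$. Since $U$ and $U^*$ propagate with opposite signs of $i$, the two terms from the product rule assemble into a von Neumann equation
\begin{equation*}
i\lambdabar\,\partial_z\hat{\rho} = [\hat{\mathcal{H}},\hat{\rho}].
\end{equation*}
Applying the Wigner transform, which by its defining Fourier integral sends the operator $\hat{\rho}$ (with kernel $W$) to $\mathcal{W}(\bm{r},\bm{p})$, to both sides then reduces the whole problem to evaluating the transform of a commutator.

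The analytic core is the star-product identity: for Weyl symbols $A,B$ the Wigner transform of the operator product $\hat{A}\hat{B}$ equals the Moyal product $A\star B = A\,\exp\!\big(\tfrac{i\lambdabar}{2}\Lambda\big)\,B$, where the bidifferential operator $\Lambda = \overleftarrow{\partial}_{\bm{r}}\cdot\overrightarrow{\partial}_{\bm{p}} - \overleftarrow{\partial}_{\bm{p}}\cdot\overrightarrow{\partial}_{\bm{r}}$ acts to the left on $A$ and to the right on $B$. Taking the antisymmetric combination cancels the even powers of $\Lambda$ and leaves the odd (sine) series, so that $A\star B - B\star A = i\lambdabar\,\{\{A,B\}\}$ with $\{\{A,B\}\}=\tfrac{2}{\lambdabar}\sin\!\big(\tfrac{\lambdabar}{2}\{A,B\}_{\text{PB}}\big)$, exactly the bracket defined above. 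Setting $A=\mathcal{H}$ and $B=\mathcal{W}$ and cancelling the common factor $i\lambdabar$ then yields $\partial_z\mathcal{W}=\{\{\mathcal{H},\mathcal{W}\}\}$.

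I expect this star-product identity to be the main obstacle: establishing that Weyl quantization converts operator composition into the $\star$-product, and hence a commutator into the sine-expanded Moyal bracket. I would either invoke the standard Weyl--Wigner calculus or derive it directly by differentiating the defining Fourier integral of $\mathcal{W}$, inserting the paraxial equation at the two displaced points $\bm{r}\pm\Delta\bm{r}/2$, and Taylor-expanding $n(\bm{r}\pm\Delta\bm{r}/2)$ in $\Delta\bm{r}$; the delicate step is resumming this gradient expansion into the closed sine form, where the non-commuting placement of $n(\bm{r})$ and $|\bm{p}|^2$ in the kinetic term forces careful Weyl symmetrization. As a consistency check, truncating the sine at its leading term reproduces the classical Liouville transport $\partial_z\mathcal{W}=\{\mathcal{H},\mathcal{W}\}_{\text{PB}}$, the geometric-optics limit $\lambda\to 0$.
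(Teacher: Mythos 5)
Your proposal follows essentially the same route as the paper: lift the paraxial (Schrödinger-type) equation to a von Neumann equation for the density operator built from the cross-spectral density, then apply the Weyl--Wigner transform so that the commutator becomes the Moyal bracket. The paper's proof is only a sketch that cites the Weyl-transform step to the literature, whereas you correctly identify and outline the star-product identity $A\star B - B\star A = i\lambdabar\,\{\{A,B\}\}$ as the analytic core; this is a faithful, somewhat more careful elaboration of the same argument.
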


\begin{proof}
Starting from the wave equation, consider the parabolic equation in the paraxial approximation:
\begin{equation}
2ik\frac{\partial U}{\partial z} = \nabla_\perp^2 U + k^2(n^2(\bm{r})-1)U
\end{equation}
The evolution equation for the corresponding density operator $\hat{\rho} = \ket{U}\bra{U}$ is:
\begin{equation}
\frac{\partial \hat{\rho}}{\partial z} = \frac{1}{i\lambdabar}[\hat{H}, \hat{\rho}]
\end{equation}
where $\hat{H}$ is the Hamiltonian operator. Transforming to phase space via the Weyl transform turns the commutator into the Moyal bracket \cite{Tatarskii1983}.
\end{proof}

\subsection{Asymptotic Expansion and Geometric Optics Limit}
Expanding the sine function as a Taylor series gives a strict mathematical formulation of the geometric optics limit \cite{Littlejohn1986}.

\begin{proposition}[Moyal Expansion]
The Wigner-Moyal equation can be expanded as a power series in $\lambdabar$:
\begin{equation}
\frac{\partial \mathcal{W}}{\partial z} = \underbrace{\{ \mathcal{H}, \mathcal{W} \}_{\text{PB}}}_{\mathcal{O}(\lambda^0)} - \underbrace{\frac{\lambdabar^2}{24} \frac{\partial^3 \mathcal{H}}{\partial \bm{r}^3} \frac{\partial^3 \mathcal{W}}{\partial \bm{p}^3}}_{\mathcal{O}(\lambda^2)} + \mathcal{O}(\lambda^4) \cdots
\end{equation}
\end{proposition}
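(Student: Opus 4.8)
The plan is to treat the proposition as a controlled asymptotic expansion of the Moyal bracket in the reduced wavelength $\lambdabar$, with the crucial proviso that the sine in the defining formula $\{\{A,B\}\} = \frac{2}{\lambdabar}\sin\!\left(\frac{\lambdabar}{2}\{A,B\}_{\text{PB}}\right)$ is to be read as the sine of the \emph{bidirectional} symplectic operator, not of a scalar. First I would make this explicit by introducing the Poisson bidifferential operator
\[
\overleftrightarrow{\Lambda} = \overleftarrow{\partial}_{\bm{r}}\cdot\overrightarrow{\partial}_{\bm{p}} - \overleftarrow{\partial}_{\bm{p}}\cdot\overrightarrow{\partial}_{\bm{r}},
\]
so that $\mathcal{H}\,\overleftrightarrow{\Lambda}\,\mathcal{W} = \{\mathcal{H},\mathcal{W}\}_{\text{PB}}$ and the Moyal bracket becomes $\{\{\mathcal{H},\mathcal{W}\}\} = \mathcal{H}\,\frac{2}{\lambdabar}\sin\!\left(\frac{\lambdabar}{2}\overleftrightarrow{\Lambda}\right)\mathcal{W}$. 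Since the left- and right-acting derivatives commute, integer powers of $\overleftrightarrow{\Lambda}$ are unambiguous and the operator sine expands exactly as its scalar analogue.

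Next I would expand $\sin(x) = x - \frac{x^{3}}{6} + \mathcal{O}(x^{5})$ with $x = \frac{\lambdabar}{2}\overleftrightarrow{\Lambda}$ and collect prefactors. The linear term returns $\overleftrightarrow{\Lambda}$ exactly, giving the $\mathcal{O}(\lambda^{0})$ Poisson-bracket piece, while the cubic term carries coefficient $\frac{2}{\lambdabar}\cdot\frac{1}{6}\left(\frac{\lambdabar}{2}\right)^{3} = \frac{\lambdabar^{2}}{24}$ with the minus sign inherited from the sine. This fixes the two leading orders and shows the remainder is $\mathcal{O}(\lambdabar^{4})$, because the next surviving contribution to the sine is of order $x^{5}$.

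The substantive step is evaluating $\mathcal{H}\,\overleftrightarrow{\Lambda}^{3}\,\mathcal{W}$. I would expand $\bigl(\overleftarrow{\partial}_{\bm{r}}\overrightarrow{\partial}_{\bm{p}} - \overleftarrow{\partial}_{\bm{p}}\overrightarrow{\partial}_{\bm{r}}\bigr)^{3}$ by the binomial theorem (legitimate since the two pieces commute), producing four terms assembled from third-order derivatives of $\mathcal{H}$ and $\mathcal{W}$. Here the explicit optical Hamiltonian from the preceding definition, $\mathcal{H}(\bm{r},\bm{p}) \approx -n(\bm{r}) + \frac{|\bm{p}|^{2}}{2n}$, performs the decisive pruning: it is additively separable into a position part and a part quadratic in $\bm{p}$, so every mixed derivative $\partial_{\bm{r}}^{a}\partial_{\bm{p}}^{b}\mathcal{H}$ with $a,b\geq 1$ vanishes, as does the pure $\partial_{\bm{p}}^{3}\mathcal{H}$. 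Of the four terms only $\partial_{\bm{r}}^{3}\mathcal{H}\,\partial_{\bm{p}}^{3}\mathcal{W}$ survives, delivering exactly the stated correction $-\frac{\lambdabar^{2}}{24}\,\partial_{\bm{r}}^{3}\mathcal{H}\,\partial_{\bm{p}}^{3}\mathcal{W}$.

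I expect the principal obstacle to be conceptual rather than computational: one must justify that the compact scalar definition is really an operator (bidifferential) expansion, and then argue honestly that the resulting series is \emph{asymptotic} in $\lambdabar$ rather than convergent, so the $\mathcal{O}(\lambda^{4})$ remainder is controlled only under smoothness of $\mathcal{H}$, $\mathcal{W}$ and slow variation of $n(\bm{r})$. A secondary subtlety worth flagging is that collapsing the cubic term to the single contribution $\partial_{\bm{r}}^{3}\mathcal{H}\,\partial_{\bm{p}}^{3}\mathcal{W}$ is exact only for the separable paraxial Hamiltonian; for the full $\mathcal{H} = -\sqrt{n^{2}-|\bm{p}|^{2}}$ the mixed third derivatives no longer vanish and additional terms appear, so the clean truncation should not be mistaken for the generic case.
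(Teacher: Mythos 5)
Your proposal is correct, and it actually takes a cleaner route than the paper's own proof --- in fact, it repairs a defect in it. The paper expands the sine and then declares that $\{\mathcal{H},\mathcal{W}\}_{\text{PB}}^{2n+1}$ ``denotes the $(2n+1)$-fold Poisson bracket,'' writing the cubic term as the nested bracket $\{\mathcal{H},\{\mathcal{H},\{\mathcal{H},\mathcal{W}\}_{\text{PB}}\}_{\text{PB}}\}_{\text{PB}}$. That is not the right object: the higher-order terms of the Moyal bracket are powers of the Poisson \emph{bidifferential} operator $\overleftarrow{\partial}_{\bm{r}}\cdot\overrightarrow{\partial}_{\bm{p}}-\overleftarrow{\partial}_{\bm{p}}\cdot\overrightarrow{\partial}_{\bm{r}}$ sandwiched between $\mathcal{H}$ and $\mathcal{W}$, not iterated brackets (nor scalar powers of $\{\mathcal{H},\mathcal{W}\}_{\text{PB}}$), and only your reading actually produces the displayed correction $-\frac{\lambdabar^{2}}{24}\,\partial_{\bm{r}}^{3}\mathcal{H}\,\partial_{\bm{p}}^{3}\mathcal{W}$. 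Your binomial expansion of the cubed operator, followed by pruning via the structure of the paraxial Hamiltonian, is the standard and correct derivation, and your closing caveat --- that the single surviving term is special to a separable Hamiltonian --- is exactly the honesty the paper's statement glosses over. One small refinement: the pruning of the mixed terms $\partial_{\bm{r}}^{a}\partial_{\bm{p}}^{b}\mathcal{H}$ with $a,b\geq 1$ requires the momentum part to have a \emph{constant} coefficient, i.e.\ $|\bm{p}|^{2}/(2n_{0})$; with the paper's $|\bm{p}|^{2}/(2n(\bm{r}))$ those mixed third derivatives do not all vanish, so your caveat applies already at the paraxial level unless $n$ is treated as constant there. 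The term $\partial_{\bm{p}}^{3}\mathcal{H}$ does vanish in either case, since $\mathcal{H}$ is quadratic in $\bm{p}$.
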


\begin{proof}
Expanding the sine function in the Moyal bracket as a Taylor series:
\begin{align*}
\{\{ \mathcal{H}, \mathcal{W} \}\} &= \frac{2}{\lambdabar} \sum_{n=0}^\infty \frac{(-1)^n}{(2n+1)!} \left( \frac{\lambdabar}{2} \right)^{2n+1} \{ \mathcal{H}, \mathcal{W} \}_{\text{PB}}^{2n+1} \\
&= \{ \mathcal{H}, \mathcal{W} \}_{\text{PB}} - \frac{(\lambdabar)^2}{24} \{ \mathcal{H}, \{ \mathcal{H}, \{ \mathcal{H}, \mathcal{W} \}_{\text{PB}} \}_{\text{PB}} \}_{\text{PB}} + \cdots
\end{align*}
where $\{ \cdot, \cdot \}_{\text{PB}}^{2n+1}$ denotes the $(2n+1)$-fold Poisson bracket. In three dimensions, higher-order terms involve higher derivatives.
\end{proof}

\begin{proposition}[Geometric Optics Limit]
When wavelength $\lambda \to 0$, all higher-order terms containing $\lambdabar$ vanish, and the equation reduces to the Liouville equation:
\begin{equation}
\frac{\partial \mathcal{W}}{\partial z} + \{ \mathcal{W}, \mathcal{H} \}_{\text{PB}} = 0
\end{equation}
This is precisely the Liouville equation from classical statistical mechanics, describing incompressible flow of phase space density along Hamiltonian flow \cite{Arnold1989}.
\end{proposition}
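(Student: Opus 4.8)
The plan is to obtain the Liouville equation directly from the Moyal expansion proved above, which already presents the Wigner-Moyal evolution as a power series in the reduced wavelength $\lambdabar$. First I would record the evolution in the organized form
\begin{equation}
\frac{\partial \mathcal{W}}{\partial z} = \{ \mathcal{H}, \mathcal{W} \}_{\text{PB}} + \sum_{n=1}^{\infty} \frac{(-1)^{n}}{(2n+1)!}\left(\frac{\lambdabar}{2}\right)^{2n} \{\mathcal{H}, \mathcal{W}\}_{\text{PB}}^{\,2n+1},
\end{equation}
where $\{\cdot,\cdot\}_{\text{PB}}^{\,2n+1}$ denotes the $(2n+1)$-fold Poisson-bracket operator. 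The decisive structural feature is that the $n=0$ term is exactly the $\lambdabar$-independent classical bracket, while every correction carries an explicit prefactor $\lambdabar^{2n}$ with $n\geq 1$ and is thus $\mathcal{O}(\lambda^{2})$ or smaller.

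The key steps, in order, would be: (i) isolate the $n=0$ term and confirm it coincides with $\{ \mathcal{H}, \mathcal{W} \}_{\text{PB}}$; (ii) estimate each higher term by $C_{n}\,\lambdabar^{2n}$, where $C_{n}$ depends only on spatial derivatives of $\mathcal{H}$ and momentum derivatives of $\mathcal{W}$ up to order $2n+1$; (iii) let $\lambda\to 0$ (equivalently $\lambdabar\to 0$) so that every $n\geq 1$ term drops out and only the leading bracket survives, yielding $\partial_{z}\mathcal{W}=\{\mathcal{H},\mathcal{W}\}_{\text{PB}}$; and (iv) apply the antisymmetry $\{\mathcal{H},\mathcal{W}\}_{\text{PB}}=-\{\mathcal{W},\mathcal{H}\}_{\text{PB}}$ to rewrite this as $\partial_{z}\mathcal{W}+\{\mathcal{W},\mathcal{H}\}_{\text{PB}}=0$, which is the stated Liouville equation describing incompressible transport of the phase-space density along the Hamiltonian flow.

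The hard part will be making steps (ii)--(iii) rigorous rather than merely formal. The correction terms contain high-order momentum derivatives such as $\partial^{3}\mathcal{W}/\partial\bm{p}^{3}$, and a partially coherent Wigner function generically develops sub-wavelength oscillations---precisely the interference structure responsible for its negative regions---so these derivatives need not remain bounded as $\lambda\to 0$. A naive term-by-term passage to the limit is therefore not justified for arbitrary fields, since the explicit $\lambdabar^{2n}$ prefactors can be offset by derivatives of $\mathcal{W}$ that grow as $\lambda$ decreases. I would resolve this by imposing the geometric-optics scaling hypothesis explicitly: the index profile $n(\bm{r})$ and the smooth envelope of $\mathcal{W}$ must vary only on a macroscopic scale $L\gg\lambda$, equivalently in a regime of large generalized Fresnel number. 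Under this hypothesis each successive correction is suppressed, relative to the leading Poisson-bracket term, by an additional power of the dimensionless semiclassical parameter $\lambda/L$, so the series is a genuine asymptotic expansion whose limit annihilates every correction uniformly and isolates the classical bracket.

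Finally, as a consistency check I would verify that for a Hamiltonian at most quadratic in $(\bm{r},\bm{p})$---the case of homogeneous media and quadratic-index (ABCD) systems---every Moyal correction involves a total of at least three derivatives of $\mathcal{H}$ and hence vanishes identically, not merely in the limit. This recovers the earlier claim that GSM propagation through any first-order optical system is algebraically exact, and confirms that the Moyal corrections genuinely measure the departure of the index profile from a quadratic, linear-symplectic one.
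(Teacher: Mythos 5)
Your proposal follows the same route as the paper's own proof: isolate the $\mathcal{O}(\lambdabar^{0})$ term of the Moyal expansion, let $\lambda\to 0$ so the $\lambdabar^{2n}$ corrections drop, and use antisymmetry of the Poisson bracket to match the stated sign convention. The paper's proof is just this formal term-by-term limit; your additional discussion of why the limit requires a slowly-varying (macroscopic-scale) hypothesis on $\mathcal{W}$ and $n(\bm{r})$ --- since sub-wavelength oscillations of the Wigner function can defeat the $\lambdabar^{2n}$ prefactors --- correctly identifies the one genuine gap in that formal argument, and your quadratic-Hamiltonian consistency check is a sound bonus.
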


\begin{proof}
From the Moyal expansion, when $\lambda \to 0$, $\lambdabar \to 0$, thus:
\begin{equation}
\lim_{\lambda \to 0} \frac{\partial \mathcal{W}}{\partial z} = \lim_{\lambda \to 0} \left( \{ \mathcal{H}, \mathcal{W} \}_{\text{PB}} + \mathcal{O}(\lambda^2) \right) = \{ \mathcal{H}, \mathcal{W} \}_{\text{PB}}
\end{equation}
which is the Liouville equation.
\end{proof}

\subsection{Generalized Fresnel Number}
For GSM sources, a scalar criterion can be derived to demarcate model selection \cite{Visser2002}.

\begin{definition}[Generalized Fresnel Number]
Consider a source radius $a$, propagation distance $z$, and transverse coherence length $\sigma_\mu$. The generalized Fresnel number is defined as:
\begin{equation}
N_F = \frac{a^2}{\lambda z} \cdot \frac{1}{\sqrt{1 + (a/\sigma_\mu)^2}}
\end{equation}
\end{definition}

The physical meaning of the generalized Fresnel number is that it unifies coherence and diffraction effects:

\begin{proposition}[Model Selection Criterion]
Based on the generalized Fresnel number $N_F$, the optimal modeling method can be selected:
\begin{enumerate}
\item When $N_F \gg 1$: near-field/geometric region, Hamiltonian ray tracing is accurate.
\item When $N_F \sim 1$: Fresnel diffraction region, the Wigner covariance matrix method should be used.
\item When $N_F \ll 1$: Fraunhofer far-field, the field distribution is completely determined by the source's angular spectrum.
\end{enumerate}
\end{proposition}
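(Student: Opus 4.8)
The plan is to recast this qualitative model-selection criterion as a sharp statement about a single dimensionless ratio: I claim that $N_F$ coincides, up to an $O(1)$ constant, with $z_R/z$, where $z_R$ is the generalized Rayleigh range of the GSM beam. Once this identification is made, the three cases $N_F \gg 1$, $N_F \sim 1$, $N_F \ll 1$ become the near-field, transition, and far-field limits $z \ll z_R$, $z \sim z_R$, $z \gg z_R$, and each assertion reduces to an asymptotic statement about the exact covariance propagation already established for the GSM in the preceding section.

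First I would propagate the GSM covariance matrix $\bm{V}$ through the free-space symplectic map $\bm{S}(z)$ using the exact law $\bm{V}(z) = \bm{S}(z)\bm{V}\bm{S}(z)^T$. Since $\bm{p} = k\bm{s}_\perp$ in the WDF convention, the paraxial drift block sends $\bm{r} \mapsto \bm{r} + (z/k)\bm{p}$, which yields $\sigma_I^2(z) = \sigma_I^2\bigl[1 + (z/z_R)^2\bigr]$ together with a growing position--momentum correlation $\langle \bm{r}\cdot\bm{p}\rangle(z) \propto (z/k)\sigma_p^2$. Reading off the crossover, $z_R \propto k\sigma_I^2/\sqrt{1 + (a/\sigma_\mu)^2}$ after inserting $\sigma_p^2 = 1/(4\sigma_I^2) + 1/\sigma_\mu^2$ and identifying the source radius $a \propto \sigma_I$. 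Comparing with the definition of $N_F$ then gives $N_F \simeq z_R/z$ up to a numerical factor; crucially, the $\sqrt{1 + (a/\sigma_\mu)^2}$ appearing in $N_F$ is exactly the coherence correction to $z_R$, which is the whole point of calling it a \emph{generalized} Fresnel number.

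With this bridge in hand each regime follows. For $N_F \gg 1$ (equivalently $z \ll z_R$) the normalized correlation coefficient $\rho(z) = \langle \bm{r}\cdot\bm{p}\rangle/(\sigma_I(z)\sigma_p)$ is $O(z/z_R) = O(1/N_F)$, so the $4$D Gaussian WDF is approximately a product of its position and momentum marginals; since that WDF is everywhere nonnegative by the GSM Wigner-representation proposition, position and direction may be sampled independently and traced by Hamilton's equations with fractional error $O(1/N_F)$, and in an inhomogeneous medium the Moyal expansion guarantees the extra wave corrections are $O(\lambdabar^2)$ and hence subdominant. For $N_F \sim 1$ the correlation coefficient is $O(1)$, the factorized approximation breaks down, yet the symplectic update $\bm{V}(z) = \bm{S}\bm{V}\bm{S}^T$ remains algebraically exact, which is precisely why the Wigner covariance-matrix method is the method of choice. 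For $N_F \ll 1$ ($z \gg z_R$) I would use the exact free-space shear $\mathcal{W}(\bm{r},\bm{p},z) = \mathcal{W}(\bm{r} - (z/k)\bm{p},\bm{p},0)$ and integrate over $\bm{p}$: when $z \gg z_R$ the source's spatial extent is negligible against the sheared momentum spread, so $S(\bm{r},z)$ becomes a rescaled copy of the momentum marginal $J(\bm{p})$, i.e.\ the far-field intensity is fixed by the angular spectrum alone.

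The main obstacle is not hard analysis but precision of interpretation. Since the statement as written is a modeling heuristic, the substantive deliverable is the identification $N_F \simeq z_R/z$ together with the exact covariance law, and the delicate points are twofold: reconciling the two momentum conventions in the paper---$\bm{p} = k\bm{s}_\perp$ in the WDF definition versus the paraxial optical momentum in the Hamiltonian---so that $N_F$ emerges genuinely dimensionless and matches $z_R/z$, and making ``geometric ray tracing is accurate'' quantitative. The second point is the real trap: in a homogeneous medium the Wigner transport equation reduces \emph{exactly} to Liouville, because $\mathcal{H}$ is at most quadratic and independent of $\bm{r}$, so all Moyal terms vanish and full-WDF ray tracing is in fact exact for a GSM at every distance. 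The content of the $N_F \gg 1$ claim is therefore about the error of the \emph{simplified}, factorized ray model that ignores the $\bm{r}$--$\bm{p}$ correlation, and I would take care to pin that error to the correlation coefficient $\rho \sim 1/N_F$ rather than to any breakdown of the transport equation itself.
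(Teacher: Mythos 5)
Your proposal is correct but follows a genuinely different route from the paper's. The paper's derivation is a dimensional-analysis sketch of the Fresnel integral: it argues that diffraction is controlled by the variation of the quadratic phase $\exp[ik|\bm{r}-\bm{r}'|^2/(2z)]$ over the source, that partial coherence washes out structure below the scale $\sigma_\mu$, and that combining the characteristic scale $\sqrt{\lambda z}$ with this smoothing yields the correction factor $\sqrt{1+(a/\sigma_\mu)^2}$ --- no phase-space machinery is invoked at all. You instead work entirely inside the covariance formalism of the preceding sections: you identify $N_F \simeq z_R/z$ with $z_R = k\sigma_I/\sigma_p$ the coherence-corrected Rayleigh range, propagate $\bm{V}$ exactly through the free-space shear, and attach a quantitative meaning to each regime via the normalized $\bm{r}$--$\bm{p}$ correlation $\rho \sim 1/N_F$. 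What your approach buys is considerable: each regime acquires an explicit error estimate (the factorized position-times-direction sampling fails at relative order $1/N_F$; the far-field intensity equals the rescaled momentum marginal up to corrections of order $N_F$), and your observation that free-space Wigner transport is \emph{exactly} Liouville for a quadratic Hamiltonian --- so that the $N_F \gg 1$ criterion really governs the factorized sampling model rather than ray transport itself --- is a sharper and more honest reading of the proposition than anything in the paper's sketch. What the paper's version buys is only generality of flavor: its Fresnel-integral heuristic does not presuppose a Gaussian Schell-model source, whereas your identification $N_F \simeq z_R/z$ leans on the specific GSM covariance matrix (and on taking $a \propto \sigma_I$). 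Both arguments are necessarily loose at the level of $O(1)$ constants, which is unavoidable for a statement phrased with $\gg$, $\sim$, and $\ll$; your version is the one I would keep.
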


\begin{proof}[Derivation idea]
Consider propagation of a GSM source. The Fresnel diffraction formula is:
\begin{equation}
U(\bm{r}, z) = \frac{e^{ikz}}{i\lambda z} \iint U_0(\bm{r}') \exp\left[ \frac{ik}{2z}|\bm{r}-\bm{r}'|^2 \right] \mathrm{d}^2\bm{r}'
\end{equation}
The importance of diffraction effects is determined by the variation of the phase factor $\exp[ik|\bm{r}-\bm{r}'|^2/(2z)]$. When the source is partially coherent, the effective spatial coherence length $\sigma_\mu$ "smooths out" small-scale diffraction fringes. Dimensional analysis gives the characteristic scale $\sqrt{\lambda z}$, with a coherence correction factor $\sqrt{1+(a/\sigma_\mu)^2}$ \cite{Visser2002}.
\end{proof}

\section{Calculation Methods for Sources in Phase Space}
From the perspective of symplectic optics, a source is no longer just a luminous surface but a density distribution occupying a certain volume in four-dimensional phase space. This section rigorously defines this region and provides calculation methods for two typical sources (Lambertian sources and Gaussian beams).

\subsection{General Definition of Source Phase Space Region}
The physical nature of a source is described by its radiance distribution function $B(x, y, p_x, p_y)$ \cite{Chaves2008}.

\begin{definition}[Phase Space Support Region of a Source]
For a given source, its phase space region $\Omega_{source} \subset \mathbb{R}^4$ is defined as the set of points where the radiance is significantly nonzero ($B > \epsilon$):
$$
\Omega_{source} = \left\{ (x, y, p_x, p_y) \in \mathbb{R}^4 \mid B(x, y, p_x, p_y) > 0 \right\}
$$
\end{definition}

\begin{definition}[Optical Extent/Etendue]
The Lebesgue measure (volume) of the source region $\Omega_{source}$ in phase space is called the optical extent $\mathcal{E}$ of the source \cite{Chaves2008}:
$$
\mathcal{E} = \int_{\Omega_{source}} dx \, dy \, dp_x \, dp_y
$$
This volume measures the information capacity of the source in the combined "space-angle" domain.
\end{definition}

\subsection{Phase Space Region of a Lambertian Source}
A Lambertian source is the most common incoherent source model in geometric optics, such as LED chips, thermal radiators, etc. Its characteristic is uniform radiance in space (within the emitting surface) and direction following the cosine law \cite{McCluney1994}.

\begin{definition}[Phase Space Hypercylinder of a Lambertian Source]
Consider a circular planar Lambertian source placed in the $z=0$ plane with radius $R$, immersed in a medium of refractive index $n$, with maximum emission half-angle $\theta_{max}$ (determined by the numerical aperture $\text{NA} = n \sin \theta_{max}$). The corresponding phase space region $\Omega_{Lamb}$ is the Cartesian product of a position space disk and a momentum space disk:
$$
\Omega_{Lamb} = \mathcal{D}_{space} \times \mathcal{D}_{momentum}
$$
Specifically:
$$
\Omega_{Lamb} = \left\{ (x, y, p_x, p_y) \in \mathbb{R}^4 \mid x^2 + y^2 \le R^2, \; p_x^2 + p_y^2 \le \text{NA}^2 \right\}
$$
\end{definition}

\subsubsection{Calculation Method}
To compute the phase space region and its volume for a Lambertian source, follow these steps:
\begin{enumerate}
\item Determine spatial boundary: Based on the physical size of the source (e.g., LED chip side length or circular emitting surface diameter), determine the geometric shape $A$ in the $x-y$ plane.
\item Determine momentum boundary: Based on the source packaging or system acceptance angle, determine the numerical aperture NA.
\item Compute phase space volume (Etendue): For a circular Lambertian source:
$$
\mathcal{E} = (\pi R^2) \times (\pi \text{NA}^2) = \pi^2 R^2 \text{NA}^2
$$
\end{enumerate}

\subsection{Phase Space Representation of Gaussian Beams}
For coherent or partially coherent sources such as lasers, the ray model must incorporate statistical properties of wave optics, and the phase space region is defined by contour surfaces of the Wigner distribution function \cite{Siegman1993}.

\subsubsection{Wigner Distribution Function}
For a fundamental Gaussian beam, the Wigner distribution function has an analytic expression \cite{Bastiaans1979}.

\begin{proposition}[WDF of a Gaussian Beam]
The Wigner distribution function of a fundamental Gaussian beam is a four-dimensional Gaussian function:
\begin{equation}
W(\bm{u}) = \frac{1}{(2\pi)^2 \sqrt{\det\boldsymbol{\Sigma}}}
\exp\left(-\frac{1}{2}(\bm{u}-\bm{u}_0)^T\boldsymbol{\Sigma}^{-1}(\bm{u}-\bm{u}_0)\right)
\end{equation}
where $\bm{u}=(x,y,p_x,p_y)^T$ is the four-dimensional phase space coordinate, and $\bm{u}_0$ is the beam center.
\end{proposition}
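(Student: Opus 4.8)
The plan is to exploit the fact that a fundamental Gaussian beam is a \emph{fully coherent} field, so its cross-spectral density factorizes as $W(\bm{r}_1,\bm{r}_2) = U^*(\bm{r}_1)U(\bm{r}_2)$ by the Fully Coherent Source definition, and then to evaluate the WDF directly from its integral definition. I would write the beam in the general complex-Gaussian form $U(\bm{r}) = A\exp(-\tfrac12\bm{r}^T\bm{G}\bm{r})$, where $\bm{G} = \bm{P} + i\bm{Q}$ is complex symmetric with $\bm{P} = \operatorname{Re}\bm{G}$ positive definite (the spot widths) and $\bm{Q} = \operatorname{Im}\bm{G}$ encoding the wavefront curvature; this single form covers both the waist plane ($\bm{Q}=\bm{0}$) and any propagated plane. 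I would first treat the centered beam ($\bm{u}_0 = \bm{0}$) and recover the general center $\bm{u}_0$ at the very end via the translation property (item~4 of the Basic Properties of the WDF), since a spatial shift plus linear tilt of $U$ translates the WDF rigidly in phase space.

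Next I would substitute the factorized CSD into the WDF definition and pass to the center/difference coordinates $\bm{r}_1 = \bm{r}-\Delta\bm{r}/2$, $\bm{r}_2 = \bm{r}+\Delta\bm{r}/2$ used there. Using the symmetry of $\bm{P}$ and $\bm{Q}$, the combined exponent of $U^*(\bm{r}_1)U(\bm{r}_2)$ separates into a pure center term $-\bm{r}^T\bm{P}\bm{r}$, a position-difference cross term $-i\,\bm{r}^T\bm{Q}\,\Delta\bm{r}$, and a pure difference term $-\tfrac14\,\Delta\bm{r}^T\bm{P}\,\Delta\bm{r}$. The curvature matrix $\bm{Q}$ is precisely what will generate the position-momentum coupling (the off-diagonal blocks of the covariance). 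Structurally this mirrors the computation already carried out in the Wigner Representation of GSM, the new feature being that $\bm{Q}\neq\bm{0}$ in general.

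The third step is the Gaussian integral over $\Delta\bm{r}$. After completing the square, the terms linear in $\Delta\bm{r}$ combine with the Fourier kernel into $-i\,(\bm{p}+\bm{Q}\bm{r})\cdot\Delta\bm{r}$, and the standard complex Gaussian integral formula (legitimate because $\bm{P}\succ 0$ guarantees convergence) yields, up to a constant, $\exp\!\big(-\bm{r}^T\bm{P}\bm{r} - (\bm{p}+\bm{Q}\bm{r})^T\bm{P}^{-1}(\bm{p}+\bm{Q}\bm{r})\big)$. This is manifestly a joint four-dimensional Gaussian in $\bm{u}=(x,y,p_x,p_y)^T$; reading the quadratic form as $-\tfrac12\,\bm{u}^T\boldsymbol{\Sigma}^{-1}\bm{u}$ identifies $\boldsymbol{\Sigma}^{-1}$ in $2\times2$ block form, with off-diagonal blocks proportional to $\bm{Q}$, and inverting produces the covariance matrix $\boldsymbol{\Sigma}$ claimed in the statement.

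The main obstacle I anticipate is the bookkeeping of the complex Gaussian integral together with the $4\times4$ block algebra: correctly assembling $\boldsymbol{\Sigma}^{-1}$, inverting it, and verifying that the accumulated prefactors collapse to exactly $1/\big((2\pi)^2\sqrt{\det\boldsymbol{\Sigma}}\big)$. I would pin the normalization down independently by integrating the result over $\bm{p}$ and matching it to the spatial intensity $S(\bm{r})=|U(\bm{r})|^2$ via the marginal property. As consistency checks I would verify two limits: at the waist ($\bm{Q}=\bm{0}$) the covariance must reduce to the diagonal GSM form in the coherent limit $\sigma_\mu\to\infty$, and under any first-order (ABCD) system the symplectic transport $\boldsymbol{\Sigma}_{\mathrm{out}} = \bm{S}\,\boldsymbol{\Sigma}_{\mathrm{in}}\,\bm{S}^T$ must preserve the Gaussian form. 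This last point in fact furnishes an alternative, computation-free argument: the WDF is a positive Gaussian at the waist (Hudson--Piquet, since the field is Gaussian), and symplectic propagation maps Gaussians to Gaussians, so the WDF is a four-dimensional Gaussian at every plane.
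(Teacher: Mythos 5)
Your proposal is correct, but note that the paper states this proposition without any proof at all, so there is no in-paper argument to match against; the closest template is the paper's proof of the Wigner Representation of GSM, which does exactly the direct substitution into the WDF definition followed by a Gaussian integral over $\Delta\bm{r}$, but only for a real Gaussian kernel. What you do differently is generalize that computation to a complex Gaussian $U(\bm{r})=A\exp(-\tfrac12\bm{r}^T(\bm{P}+i\bm{Q})\bm{r})$, and this generalization is genuinely needed: away from the waist the curvature matrix $\bm{Q}$ is nonzero, and it is precisely what produces the off-diagonal $\langle x\,p_x\rangle$ blocks that the paper's own Second Moments proposition ($\langle x p_x\rangle = w(z)^2/4R(z)$) requires, so a proof that only treats the waist plane would not establish the claim at a general $z$. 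Your exponent bookkeeping checks out: the cross terms combine to $-i\,\bm{r}^T\bm{Q}\Delta\bm{r}$ by symmetry of $\bm{Q}$, the integral over $\Delta\bm{r}$ is a real Gaussian with an imaginary linear term (so convergence is immediate from $\bm{P}\succ0$, no genuinely complex quadratic form is involved), and the resulting block form $\tfrac12\boldsymbol{\Sigma}^{-1}=\bigl(\begin{smallmatrix}\bm{P}+\bm{Q}\bm{P}^{-1}\bm{Q} & \bm{Q}\bm{P}^{-1}\\ \bm{P}^{-1}\bm{Q} & \bm{P}^{-1}\end{smallmatrix}\bigr)$ reduces at $\bm{Q}=\bm{0}$ to the diagonal GSM covariance in the coherent limit, consistent with $\langle x^2\rangle = w_0^2/4$ and $\langle p_x^2\rangle = 1/w_0^2$. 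Your alternative closing argument --- positive Gaussian WDF at the waist by Hudson--Piquet plus symplectic transport $\boldsymbol{\Sigma}_{\mathrm{out}}=\bm{S}\boldsymbol{\Sigma}_{\mathrm{in}}\bm{S}^T$ preserving Gaussianity --- is also sound and is arguably closer in spirit to how the paper handles propagation, at the cost of not exhibiting $\boldsymbol{\Sigma}$ explicitly in terms of the beam parameters. Either route is acceptable; the explicit computation buys you the covariance entries that the paper's Twiss-parameter discussion then uses.
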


\subsubsection{Covariance Matrix and Phase Space Region}
\begin{definition}[Phase Space Region]
The phase space region of a Gaussian beam is defined as the contour surface of the WDF containing a specific energy proportion (e.g., $1-1/e^2 \approx 86.5\%$):
\begin{equation}
\Omega = \left\{ \bm{u} \in \mathbb{R}^4 \mid (\bm{u}-\bm{u}_0)^T\boldsymbol{\Sigma}^{-1}(\bm{u}-\bm{u}_0) \leq 1 \right\}
\end{equation}
This is a four-dimensional hyperellipsoid.
\end{definition}

\subsubsection{Second Moment Calculation}
\begin{proposition}[Second Moments of a Gaussian Beam]
For a fundamental Gaussian beam, the second moments in position-momentum space are:
\begin{align}
\langle x^2 \rangle &= \frac{w(z)^2}{4} \\
\langle p_x^2 \rangle &= \frac{1}{w_0^2} \\
\langle x p_x \rangle &= \frac{w(z)^2}{4R(z)} = \frac{z}{2z_R}
\end{align}
where $p_x$ is the reduced momentum \cite{Siegman1993}.
\end{proposition}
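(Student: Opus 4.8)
The plan is to identify the three quantities as the entries of the one-dimensional phase-space covariance matrix $\boldsymbol{\Sigma}(z)$ and to compute them through the field-based moment formulas that follow from the WDF definition and its marginals. First I would translate each moment into an integral over the field $U(x,z)$. The spatial marginal $\int \mathcal{W}\,\mathrm{d}p = |U|^2$ gives $\langle x^2\rangle = \int x^2|U|^2\,\mathrm{d}x \big/ \int |U|^2\,\mathrm{d}x$; the momentum marginal (the squared modulus of the angular spectrum) together with Parseval gives $\langle p_x^2\rangle = \int |\partial_x U|^2\,\mathrm{d}x \big/ \int|U|^2\,\mathrm{d}x$; and, using $p\,e^{-ip\Delta x}=i\,\partial_{\Delta x}e^{-ip\Delta x}$ inside the WDF and integrating by parts against $\delta'(\Delta x)$, the cross moment becomes $\langle x p_x\rangle = \int x\,\mathrm{Im}\!\left(U^*\partial_x U\right)\mathrm{d}x \big/ \int|U|^2\,\mathrm{d}x$. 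Establishing this last identity is the one genuinely delicate bookkeeping step.

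Next I would evaluate everything at the waist $z=0$, where $U_0(x)=A\exp(-x^2/w_0^2)$ is a real Gaussian. The elementary Gaussian integrals give $\langle x^2\rangle_0 = w_0^2/4$ and $\langle p_x^2\rangle_0 = 1/w_0^2$ at once, and because $U_0$ is real the imaginary part in the cross moment vanishes, so $\langle x p_x\rangle_0 = 0$. To reach arbitrary $z$ I would not re-integrate but instead invoke the symplectic propagation law $\boldsymbol{\Sigma}(z)=\bm{S}\boldsymbol{\Sigma}(0)\bm{S}^T$ established earlier, with the free-propagation transfer matrix $\bm{S}=\bigl(\begin{smallmatrix}1 & z/k\\ 0 & 1\end{smallmatrix}\bigr)$. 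This shear leaves the momentum block untouched, which is precisely why $\langle p_x^2\rangle(z)=1/w_0^2$ is $z$-independent: physically, free propagation conserves the angular spectrum, i.e.\ the momentum marginal $J(\bm{p})$. The same matrix multiplication yields $\langle x p_x\rangle(z) = (z/k)\langle p_x^2\rangle_0 = z/(2z_R)$ (using $z_R = k w_0^2/2$) and $\langle x^2\rangle(z) = w_0^2/4 + (z/k)^2/w_0^2$.

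Finally I would reconcile these with the beam parameters by substituting $w(z)^2 = w_0^2\bigl(1+(z/z_R)^2\bigr)$ and $R(z)=z\bigl(1+(z_R/z)^2\bigr)$, which turns $\langle x^2\rangle(z)$ into $w(z)^2/4$ and recovers $\langle x p_x\rangle(z)=z/(2z_R)=k\,w(z)^2/(4R(z))$. The main obstacle I anticipate is not any single integral but keeping the normalization of the ``reduced momentum'' consistent: the factor of $k$ relating the wavefront curvature $R(z)$ to the phase-space shear is exactly what distinguishes the displayed $w(z)^2/(4R(z))$ from $z/(2z_R)$, so one must fix once and for all whether $p_x$ denotes the spatial frequency $k s_x$ or the reduced direction cosine $s_x$ before the three formulas can be made simultaneously exact. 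As an independent check I would run the direct route, inserting the full field $U(x,z)$ with its curvature phase $\exp\!\bigl(ikx^2/2R\bigr)$ into the moment formulas; there $\langle p_x^2\rangle$ emerges from the algebraic identity $1/w^2 + k^2 w^2/(4R^2) = 1/w_0^2$, whose collapse to a constant is the analytic shadow of the symplectic invariance exploited above.
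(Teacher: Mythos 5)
The paper itself offers no proof of this proposition --- it is stated bare with a citation to Siegman --- so there is nothing to compare your argument against; judged on its own, your derivation is correct and complete. The waist-plus-symplectic-shear route is sound: the field-based moment formulas you quote (spatial marginal for $\langle x^2\rangle$, Parseval for $\langle p_x^2\rangle$, and $\langle x p_x\rangle = \int x\,\mathrm{Im}(U^*\partial_x U)\,\mathrm{d}x / \int |U|^2\,\mathrm{d}x$ via the $\delta'(\Delta x)$ manipulation) all check out, the waist values $w_0^2/4$, $1/w_0^2$, $0$ are right, and pushing them through $\bm{S}=\bigl(\begin{smallmatrix}1 & z/k\\ 0 & 1\end{smallmatrix}\bigr)$ reproduces $w(z)^2/4$ and $z/(2z_R)$ and is consistent with the paper's own claim $\epsilon=\tfrac12$ in the Twiss-parameter definition that follows (indeed $\det\boldsymbol{\Sigma}_{2D}=\tfrac14$ for all $z$). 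Your most valuable observation is the one the paper glosses over: the two displayed expressions for the cross moment are \emph{not} equal under any single normalization of $p_x$. With $p_x$ the spatial frequency $k s_x$ one gets $\langle p_x^2\rangle = 1/w_0^2$ and $\langle x p_x\rangle = z/(2z_R) = k\,w(z)^2/(4R(z))$; with $p_x$ the direction cosine $s_x$ one gets $\langle x p_x\rangle = w(z)^2/(4R(z))$ but then $\langle p_x^2\rangle = 1/(k^2 w_0^2)$. So the proposition as printed mixes conventions, and your derivation both proves the correct version and pinpoints the missing factor of $k$; the independent check via $1/w^2 + k^2 w^2/(4R^2) = 1/w_0^2$ is also correct and is a nice sanity test.
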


\subsubsection{Twiss Parameter Representation}
\begin{definition}[Twiss Parameters]
Define the two-dimensional covariance matrix:
\begin{equation}
\boldsymbol{\Sigma}_{2D} = \begin{pmatrix}
\langle x^2 \rangle & \langle x p_x \rangle \\
\langle p_x x \rangle & \langle p_x^2 \rangle
\end{pmatrix}
= \epsilon \begin{pmatrix}
\beta & -\alpha \\
-\alpha & \gamma
\end{pmatrix}
\end{equation}
where:
\begin{itemize}
\item $\epsilon = \frac{1}{2}$: RMS emittance (dimensionless)
\item $\alpha, \beta, \gamma$: Twiss parameters, satisfying $\beta\gamma - \alpha^2 = 1$ \cite{Twiss1958}
\end{itemize}
\end{definition}

\section{Phase Space Theory of Source Superposition}
Source superposition is central to laser array design. Depending on whether fixed phase relationships exist between sub-sources, superposition is classified as incoherent superposition and coherent superposition \cite{Alonso2000}.

\subsection{Incoherent Superposition: Linear Additivity Principle}
\begin{proposition}[Incoherent Superposition Rule]
For two statistically independent (incoherent) sources $S_1$ and $S_2$, the total cross-spectral density is the sum of individual components: $\Gamma_{\text{total}} = \Gamma_1 + \Gamma_2$. Since the WDF is a linear transform of $\Gamma$, the total WDF is the algebraic sum of component WDFs \cite{Bastiaans1978}:
\begin{equation}
W_{\text{incoh}}(x, u) = W_1(x, u) + W_2(x, u).
\end{equation}
\end{proposition}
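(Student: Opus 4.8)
The plan is to factor the claim into two elementary facts: the additivity of the cross-spectral density under incoherent superposition, and the linearity of the Wigner transform. First I would write the total analytic signal as the superposition $U = U_1 + U_2$ of the fields radiated by the two sources and insert it into the CSD defined above. Expanding the ensemble average bilinearly yields four terms,
\begin{equation*}
\Gamma_{\text{total}}(\bm{r}_1, \bm{r}_2) = \left\langle U_1^*(\bm{r}_1) U_1(\bm{r}_2) \right\rangle + \left\langle U_2^*(\bm{r}_1) U_2(\bm{r}_2) \right\rangle + \left\langle U_1^*(\bm{r}_1) U_2(\bm{r}_2) \right\rangle + \left\langle U_2^*(\bm{r}_1) U_1(\bm{r}_2) \right\rangle,
\end{equation*}
the first two of which are precisely $\Gamma_1$ and $\Gamma_2$.

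The crucial step is to show that the two cross terms vanish. Because $S_1$ and $S_2$ are statistically independent, the ensemble average factorizes, for instance $\left\langle U_1^*(\bm{r}_1) U_2(\bm{r}_2) \right\rangle = \left\langle U_1^*(\bm{r}_1) \right\rangle \left\langle U_2(\bm{r}_2) \right\rangle$. Invoking the zero-mean property of the analytic signal of a statistically stationary source, $\left\langle U_i \right\rangle = 0$, both cross terms disappear, leaving $\Gamma_{\text{total}} = \Gamma_1 + \Gamma_2$. I expect this to be the main obstacle, not because the algebra is hard but because it is the only place where a genuine physical hypothesis enters: the informal phrase ``no fixed phase relation between sub-sources'' must be sharpened into the vanishing of the cross-correlation, which statistical independence together with the zero-mean assumption supplies. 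It is worth stating this hypothesis explicitly so that the result is not mistaken for a purely formal identity.

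Finally, I would invoke the definition of the WDF as the Fourier transform of the CSD with respect to the difference coordinate $\Delta\bm{r}$. Since that transform acts linearly on its kernel, applying it to $\Gamma_{\text{total}} = \Gamma_1 + \Gamma_2$ commutes with the sum and distributes termwise, giving $W_{\text{incoh}}(x, u) = W_1(x, u) + W_2(x, u)$. No Gaussianity, positivity, or coherence assumption on the individual components is needed, so the additivity holds for arbitrary incoherent constituents; the same linearity argument extends immediately to a finite or countable family of mutually independent sources by induction.
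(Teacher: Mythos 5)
Your proposal is correct and follows the same route the paper takes, namely additivity of the cross-spectral density followed by linearity of the Wigner transform; the paper simply asserts these two facts without a formal proof. Your explicit justification of the vanishing cross terms --- statistical independence plus the zero-mean property of the analytic signal --- is a worthwhile sharpening of the hypothesis that the paper leaves implicit.
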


\textbf{Corollary:} For an array composed of $N$ incoherent units, the total WDF is simply the sum of single-aperture WDFs in phase space:
\begin{equation}
W_{\text{array, incoh}}(x, u) = \sum_{n=1}^N W_n(x, u).
\end{equation}

\subsection{Coherent Superposition: Cross-Wigner Distribution and Interference Terms}
When two sources are coherently superposed (e.g., phase-locked laser arrays), the complex amplitudes satisfy linear superposition: $\psi_{\text{total}}(x) = c_1 \psi_1(x) + c_2 \psi_2(x)$. Since the WDF is a bilinear functional of the wavefunction, the total WDF is no longer a simple sum of components but includes interference terms \cite{Alonso2000}.

\begin{definition}[Cross-Wigner Distribution]
Define the cross-Wigner distribution between two different wavefunctions $\psi_1, \psi_2$ as:
\begin{equation}
W_{\psi_1, \psi_2}(x, u) \triangleq \int_{-\infty}^{\infty} \psi_1\left(x + \frac{\xi}{2}\right) \psi_2^*\left(x - \frac{\xi}{2}\right) e^{-i 2\pi u \xi} \, d\xi.
\end{equation}
\end{definition}

\begin{proposition}[Coherent Superposition Formula]
The total WDF can be expressed as:
\begin{equation}
W_{\text{coh}}(x, u) = |c_1|^2 W_{11}(x, u) + |c_2|^2 W_{22}(x, u) + 2 \text{Re} \big[ c_1 c_2^* W_{\psi_1, \psi_2}(x, u) \big].
\end{equation}
\end{proposition}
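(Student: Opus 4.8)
The plan is to compute the Wigner distribution of the superposed field directly from its bilinear definition and then collect terms. First I would write the total field as $\psi_{\text{total}}(x) = c_1\psi_1(x) + c_2\psi_2(x)$ and substitute the advanced and retarded arguments $x + \xi/2$ and $x - \xi/2$ into the defining integral of the auto-Wigner distribution, namely $W_{\text{coh}}(x,u) = \int \psi_{\text{total}}(x+\xi/2)\,\psi_{\text{total}}^*(x-\xi/2)\,e^{-i 2\pi u \xi}\,d\xi$. Because the integrand is the product of the field at the advanced point with the conjugate field at the retarded point, expanding the bracket produces exactly four terms: two \emph{diagonal} contributions and two \emph{cross} contributions.

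The two diagonal terms, $c_1 c_1^*\,\psi_1(x+\xi/2)\psi_1^*(x-\xi/2)$ and $c_2 c_2^*\,\psi_2(x+\xi/2)\psi_2^*(x-\xi/2)$, integrate against the kernel $e^{-i 2\pi u \xi}$ to give $|c_1|^2 W_{11}$ and $|c_2|^2 W_{22}$ by the very definition of the auto-WDF, so these reproduce the first two terms of the claimed identity immediately. The two cross terms are $c_1 c_2^* W_{\psi_1,\psi_2}(x,u)$ and $c_2 c_1^* W_{\psi_2,\psi_1}(x,u)$, where the second cross-Wigner distribution is obtained by interchanging the roles of $\psi_1$ and $\psi_2$ in the definition given above.

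The one genuinely substantive step is to show these two cross terms combine into a single real part, which I would handle by establishing the conjugate-symmetry identity $W_{\psi_2,\psi_1}(x,u) = W_{\psi_1,\psi_2}^*(x,u)$. This follows by conjugating the defining integral of $W_{\psi_1,\psi_2}$, which flips the sign of the exponential and conjugates each factor, and then applying the change of variables $\xi \mapsto -\xi$, which restores the kernel $e^{-i 2\pi u \xi}$ and simultaneously swaps the advanced and retarded arguments so that $\psi_1^*$ lands on $x-\xi/2$ and $\psi_2$ on $x+\xi/2$. Combined with the scalar relation $c_2 c_1^* = (c_1 c_2^*)^*$, this makes the fourth term exactly the complex conjugate of the third, so their sum equals $2\,\text{Re}\big[c_1 c_2^* W_{\psi_1,\psi_2}(x,u)\big]$, completing the derivation.

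I do not expect any serious obstacle, since the entire result is a consequence of the bilinearity of the Wigner functional; the only place demanding care is the sign bookkeeping in the change of variables that yields the conjugate-symmetry relation, because a dropped sign there would spuriously convert the interference term into an imaginary part and break the reality of $W_{\text{coh}}$. I would also remark that, unlike the manifestly nonnegative incoherent sum, the interference term $2\,\text{Re}\big[c_1 c_2^* W_{\psi_1,\psi_2}\big]$ need not be sign-definite and generically oscillates in $(x,u)$, which is precisely the phase-space signature that distinguishes coherent from incoherent superposition and which can drive the total WDF negative.
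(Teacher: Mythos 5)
Your proposal is correct. The paper actually states this proposition without any proof, and your argument --- expanding the bilinear Wigner functional of $c_1\psi_1 + c_2\psi_2$ into two diagonal terms and two cross terms, then folding the cross terms into a single real part via the conjugate-symmetry relation $W_{\psi_2,\psi_1}(x,u) = W_{\psi_1,\psi_2}^*(x,u)$ --- is the standard derivation and is consistent with the paper's convention of placing the unconjugated field at the advanced argument $x+\xi/2$. The sign bookkeeping you flag (conjugation flips the kernel, the substitution $\xi\mapsto-\xi$ restores it while swapping the advanced and retarded arguments) is handled correctly, so the interference term is real as required.
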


\subsection{WDF Distribution Calculation for Specific Geometric Configurations}
Taking fundamental Gaussian beams as elements, we derive in detail analytic formulas for spatial separation and angular separation configurations \cite{Alonso2000}.

Let the normalized wavefunction of a fundamental Gaussian beam be:
\begin{equation}
\psi_0(x) = \left(\frac{2}{\pi w_0^2}\right)^{1/4} \exp\left(-\frac{x^2}{w_0^2}\right),
\end{equation}
its auto-WDF is:
\begin{equation}
W_0(x, u) = 2 \exp\left( - \frac{2x^2}{w_0^2} - 2\pi^2 w_0^2 u^2 \right).
\end{equation}

\subsubsection{Spatial Separation Configuration}
Configuration description: Two in-phase, equal-amplitude coherent Gaussian beams located at spatial coordinates $x_0$ and $-x_0$, respectively, with parallel optical axes ($u=0$).

\begin{proposition}[WDF of Spatially Separated Gaussian Beams]
The WDF of superposed spatially separated Gaussian beams is:
\begin{equation}
W_{\text{spatial}}(x, u) = \frac{1}{2} \left[ W_0(x - x_0, u) + W_0(x + x_0, u) + 2 W_0(x, u) \cos(4\pi x_0 u) \right].
\end{equation}
\end{proposition}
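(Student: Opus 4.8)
The plan is to apply the coherent superposition formula (the preceding proposition) with normalized equal amplitudes $c_1 = c_2 = 1/\sqrt{2}$ and the two displaced constituents $\psi_1(x) = \psi_0(x-x_0)$, $\psi_2(x) = \psi_0(x+x_0)$. The two self-energy terms then carry weight $|c_1|^2 = |c_2|^2 = 1/2$, while the interference term reduces to $2\,\text{Re}[c_1 c_2^* W_{\psi_1,\psi_2}] = \text{Re}[W_{\psi_1,\psi_2}]$ because $c_1 c_2^*$ is real and positive. Since the paper already records that a pure spatial translation of the wavefunction shifts the WDF rigidly in position (the translation property with zero angular shift), the two auto-WDF terms are immediate: $W_{11}(x,u) = W_0(x-x_0,u)$ and $W_{22}(x,u) = W_0(x+x_0,u)$. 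All the genuine work is therefore concentrated in evaluating the single cross-Wigner integral.

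To compute $W_{\psi_1,\psi_2}(x,u)$, I would substitute the explicit Gaussian profiles into the cross-Wigner definition, so that the integrand becomes $\psi_0(x-x_0+\xi/2)\,\psi_0(x+x_0-\xi/2)\,e^{-i2\pi u\xi}$ (using that $\psi_0$ is real). The next step is to expand the two quadratic exponents and collect powers of $\xi$: the $\xi^0$ part produces a prefactor $\exp(-(2x^2+2x_0^2)/w_0^2)$, the $\xi^1$ part produces a real drift $+2x_0\xi/w_0^2$, and the $\xi^2$ part gives $-\xi^2/(2w_0^2)$. Merging the linear drift with the Fourier kernel, the $\xi$-integral becomes a standard Gaussian integral with a \emph{complex} linear coefficient $B = 2x_0/w_0^2 - i2\pi u$; completing the square yields $\sqrt{2\pi}\,w_0\,\exp(B^2 w_0^2/2)$.

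The crux — and the step I expect to be the main obstacle — is the bookkeeping inside $\exp(B^2 w_0^2/2)$. Expanding $B^2$ produces three contributions: a real term $+2x_0^2/w_0^2$, a pure imaginary cross term $-i4\pi x_0 u$, and a real term $-2\pi^2 w_0^2 u^2$. The delicate point is that the real $+2x_0^2/w_0^2$ must cancel exactly against the $-2x_0^2/w_0^2$ sitting in the $\xi^0$ prefactor; once this cancellation is confirmed, the surviving real Gaussian is $\exp(-2x^2/w_0^2 - 2\pi^2 w_0^2 u^2)$, which is centered at $x=0$ (the geometric midpoint of the two beams, \emph{not} at $\pm x_0$), while the imaginary part survives as a pure phase $e^{-i4\pi x_0 u}$. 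Tracking the normalization constants — the product $C=(2/\pi w_0^2)^{1/2}$ from the two amplitude prefactors times $\sqrt{2\pi}\,w_0$ from the integral — collapses the leading coefficient to exactly $2$, so that $W_{\psi_1,\psi_2}(x,u) = W_0(x,u)\,e^{-i4\pi x_0 u}$.

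Finally I would assemble the pieces: taking the real part gives $\text{Re}[W_{\psi_1,\psi_2}] = W_0(x,u)\cos(4\pi x_0 u)$, and adding the two $1/2$-weighted auto terms yields $\tfrac{1}{2}[W_0(x-x_0,u)+W_0(x+x_0,u)+2W_0(x,u)\cos(4\pi x_0 u)]$, which is the claimed identity. The physically meaningful point worth stressing is that the interference term lives on the axis $x=0$ and oscillates sinusoidally in the frequency variable $u$ with a period fixed by the separation $2x_0$ — the phase-space fingerprint of two-beam fringes.
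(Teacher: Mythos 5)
Your computation is correct and follows exactly the route the paper intends (the paper states this proposition without writing out the calculation): apply the coherent superposition formula with $c_1=c_2=1/\sqrt{2}$, use the translation property for the two auto-terms, and evaluate the cross-Wigner Gaussian integral, where the cancellation of $e^{+2x_0^2/w_0^2}$ against the $e^{-2x_0^2/w_0^2}$ prefactor and the collapse of $(2/\pi w_0^2)^{1/2}\sqrt{2\pi}\,w_0$ to $2$ both check out, giving $W_{\psi_1,\psi_2}(x,u)=W_0(x,u)e^{-i4\pi x_0 u}$ and hence the stated result.
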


\subsubsection{Angular Separation Configuration}
Configuration description: Two spatially coincident beams crossing at different angles $\pm \theta_0$, corresponding to spatial frequency shifts $\pm u_0$.

\begin{proposition}[WDF of Angularly Separated Gaussian Beams]
The WDF of superposed angularly separated Gaussian beams is:
\begin{equation}
W_{\text{angular}}(x, u) = \frac{1}{2} \left[ W_0(x, u - u_0) + W_0(x, u + u_0) + 2 W_0(x, u) \cos(4\pi u_0 x) \right].
\end{equation}
where $u_0 = \sin\theta_0/\lambda$.
\end{proposition}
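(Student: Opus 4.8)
The plan is to apply the coherent superposition formula (the Proposition on coherent superposition) to the two constituent beams and to exploit the exact formal duality between this configuration and the spatial-separation case already established: the derivation goes through verbatim with the roles of position $x$ and spatial frequency $u$ interchanged. First I would model the two angularly separated beams as the fundamental beam $\psi_0$ modulated by opposite linear phase ramps,
$$\psi_1(x) = \psi_0(x)\,e^{i2\pi u_0 x}, \qquad \psi_2(x) = \psi_0(x)\,e^{-i2\pi u_0 x},$$
combined in phase with equal weights $c_1 = c_2 = 1/\sqrt{2}$, so that $|c_1|^2=|c_2|^2=1/2$ and $c_1 c_2^* = 1/2$. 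The overall $1/2$ prefactor in the target formula then accounts for the normalization of the superposition, exactly as in the spatial-separation proposition.

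Second, I would evaluate the two auto-terms $W_{11}$ and $W_{22}$. These follow immediately from the frequency-domain translation property, which is the Fourier dual of the position-translation rule in the Proposition on basic properties of the WDF: multiplying a field by $e^{\pm i2\pi u_0 x}$ shifts its Wigner distribution to $W_0(x, u \mp u_0)$. Substituting $\psi_1$ and $\psi_2$ into the cross-Wigner definition and observing that the two linear phases assemble into a factor $e^{-i2\pi(u \mp u_0)\xi}$ under the integral yields $W_{11}(x,u) = W_0(x, u-u_0)$ and $W_{22}(x,u) = W_0(x, u+u_0)$ directly.

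Third, and this is where the interference structure emerges, I would compute the cross-Wigner distribution $W_{\psi_1,\psi_2}$ straight from its definition. The decisive bookkeeping point is that $\psi_2^*$ carries a $+i2\pi u_0$ phase, so in the product $\psi_1(x+\xi/2)\,\psi_2^*(x-\xi/2)$ the two ramps reinforce rather than cancel: the $\xi$-dependent contributions drop out entirely, leaving a pure position-dependent prefactor and giving $W_{\psi_1,\psi_2}(x,u) = e^{i4\pi u_0 x}\,W_0(x,u)$. Taking twice its real part weighted by $c_1 c_2^* = 1/2$, and using that the auto-WDF $W_0$ is real, converts this into the modulation term $\cos(4\pi u_0 x)\,W_0(x,u)$. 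Assembling the three contributions reproduces the claimed formula. The only genuine obstacle is sign discipline in these phase factors, namely distinguishing the partial cancellation that produces the frequency shifts $u \mp u_0$ in the auto-terms from the reinforcement that produces the $x$-dependent cosine fringes in the cross-term; this is precisely the mirror image of the spatial-separation derivation under the $x \leftrightarrow u$ swap.
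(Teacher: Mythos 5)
Your derivation is correct: the choices $\psi_{1,2}(x)=\psi_0(x)e^{\pm i2\pi u_0x}$ with $c_1=c_2=1/\sqrt{2}$, the frequency-shift identification of the auto-terms, and the observation that the ramps cancel in $\xi$ but reinforce in $x$ in the cross-Wigner term (giving $e^{i4\pi u_0x}W_0(x,u)$) assemble exactly into the stated formula. The paper states this proposition without proof, but your argument uses precisely the machinery it sets up immediately beforehand (the coherent superposition formula and the cross-Wigner definition), mirroring the spatial-separation case under the $x\leftrightarrow u$ exchange, so it is the intended derivation and fills the omitted step correctly.
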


\subsubsection{Arbitrary Phase Space Placement}
\begin{proposition}[General WDF Formula for Superposed Gaussian Wave Packets]
Let two coherent Gaussian wave packets have centers in phase space at $\bm{z}_1 = (x_1, u_1)$ and $\bm{z}_2 = (x_2, u_2)$, with relative phase difference $\phi$. The WDF of the superposition state is \cite{Alonso2000}:
\begin{equation}
W_{\text{total}}(\bm{z}) = W_0(\bm{z} - \bm{z}_1) + W_0(\bm{z} - \bm{z}_2) + 2 W_0\left(\bm{z} - \frac{\bm{z}_1 + \bm{z}_2}{2}\right) \cos\left[ 2\pi \left( (x - x_{\text{mid}})\Delta u - (u - u_{\text{mid}})\Delta x \right) + \phi \right],
\end{equation}
where $\bm{z} = (x, u)$, $\Delta x = x_1 - x_2$, $\Delta u = u_1 - u_2$, $\bm{z}_{\text{mid}} = \frac{\bm{z}_1 + \bm{z}_2}{2}$.
\end{proposition}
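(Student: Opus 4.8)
The plan is to reduce the statement to the already-established Coherent Superposition Formula plus a single explicit evaluation of the cross-Wigner distribution, using bilinearity and the phase-space translation property of the WDF. First I would represent each packet as a displacement of the fundamental mode, $\psi_j(x) = \psi_0(x-x_j)\,e^{i2\pi u_j x}$ for $j=1,2$, and form $\psi_{\text{total}} = c_1\psi_1 + c_2\psi_2$ with $|c_1|=|c_2|=1$ and $c_1 c_2^* = e^{i\phi}$ carrying the relative phase. Applying the Coherent Superposition Formula gives $W_{\text{total}} = W_{\psi_1,\psi_1} + W_{\psi_2,\psi_2} + 2\,\mathrm{Re}[e^{i\phi} W_{\psi_1,\psi_2}]$. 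By the translation property (Basic Properties of WDF), the two auto-terms are immediately $W_0(\bm{z}-\bm{z}_1)$ and $W_0(\bm{z}-\bm{z}_2)$, so only the cross-Wigner $W_{\psi_1,\psi_2}$ remains to be computed.

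The heart of the argument --- and the step I expect to be the main obstacle --- is the explicit evaluation of $W_{\psi_1,\psi_2}(x,u)$. I would introduce midpoint and difference coordinates $\bar{x}=(x_1+x_2)/2$, $\Delta x = x_1-x_2$, $\bar{u}=(u_1+u_2)/2$, $\Delta u = u_1-u_2$, substitute the two displaced Gaussians into the cross-Wigner definition, and then perform the decisive change of variable $\eta = \xi-\Delta x$. Under this shift the product $\psi_0(x+\tfrac{\xi}{2}-x_1)\,\psi_0^*(x-\tfrac{\xi}{2}-x_2)$ recombines exactly into $\psi_0(X+\tfrac{\eta}{2})\,\psi_0^*(X-\tfrac{\eta}{2})$ with $X = x-\bar{x}$, so that the remaining integral is precisely the auto-Wigner $W_0(x-\bar{x},\,u-\bar{u}) = W_0(\bm{z}-\bm{z}_{\text{mid}})$. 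All dependence that does not enter this integral collects into a pure phase; careful bookkeeping of the exponentials $e^{i2\pi u_1(x+\xi/2)}$, $e^{-i2\pi u_2(x-\xi/2)}$, $e^{-i2\pi u\xi}$ should yield $W_{\psi_1,\psi_2}(x,u) = W_0(\bm{z}-\bm{z}_{\text{mid}})\,\exp\{i[2\pi((x-x_{\text{mid}})\Delta u - (u-u_{\text{mid}})\Delta x) + 2\pi\bar{x}\Delta u]\}$, in which the first part of the exponent is the symplectic area spanned by $\bm{z}-\bm{z}_{\text{mid}}$ and the displacement $(\Delta x,\Delta u)$.

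Finally I would assemble the pieces. Since $W_0$ is real and nonnegative for a Gaussian, $2\,\mathrm{Re}[e^{i\phi}W_{\psi_1,\psi_2}] = 2W_0(\bm{z}-\bm{z}_{\text{mid}})\cos[2\pi((x-x_{\text{mid}})\Delta u - (u-u_{\text{mid}})\Delta x)+\phi]$, where the constant geometric phase $2\pi\bar{x}\Delta u$ has been absorbed into the definition of $\phi$; together with the two auto-terms this reproduces the claimed formula. The delicate points to watch are the sign/orientation of the symplectic phase (which must come out as $(x-x_{\text{mid}})\Delta u - (u-u_{\text{mid}})\Delta x$ and not its negative) and the handling of the residual constant phase $2\pi\bar{x}\Delta u$, which is the only place the normalization convention of the displacement enters. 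Adopting the symmetric displacement $\psi_j(x) = \psi_0(x-x_j)\,e^{i2\pi u_j(x-x_j/2)}$ would cancel this constant cleanly if one wants $\phi$ to denote exactly the packets' intrinsic phase offset.
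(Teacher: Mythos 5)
The paper states this proposition without proof --- it is simply quoted from \cite{Alonso2000}, as are the two special cases (spatial and angular separation) that precede it --- so there is no in-text argument to compare against; your derivation supplies the missing proof, and it is correct. The reduction to the Coherent Superposition Formula with $|c_1|=|c_2|=1$ and $c_1c_2^*=e^{i\phi}$, the use of the translation property for the two auto-terms, and the shift $\eta=\xi-\Delta x$ that recombines the displaced Gaussians into $\psi_0(X+\eta/2)\psi_0^*(X-\eta/2)$ with $X=x-\bar{x}$ are exactly the right steps. I have checked the bookkeeping: the cross term comes out as $W_0(\bm{z}-\bm{z}_{\mathrm{mid}})\exp\{i[2\pi((x-x_{\mathrm{mid}})\Delta u-(u-u_{\mathrm{mid}})\Delta x)+2\pi\bar{x}\Delta u]\}$ as you claim, with the correct sign of the symplectic phase, and the two preceding propositions drop out as special cases ($\bm{z}_{1,2}=(\pm x_0,0)$ gives $\cos(4\pi x_0u)$, $\bm{z}_{1,2}=(0,\pm u_0)$ gives $\cos(4\pi u_0x)$). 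One small correction to your closing aside: the symmetric Weyl displacement $\psi_j(x)=\psi_0(x-x_j)e^{i2\pi u_j(x-x_j/2)}$ does \emph{not} cancel the constant phase cleanly; it replaces $2\pi\bar{x}\Delta u$ by $\pi(x_2u_1-x_1u_2)$, i.e.\ half the symplectic area between $\bm{z}_1$ and $\bm{z}_2$ (times $2\pi$), which is still generally nonzero --- this is just the usual composition phase of displacement operators. Since the proposition never pins down how $\phi$ is defined relative to the wavefunctions, absorbing the constant into $\phi$ is legitimate under either convention, so this does not affect the validity of your proof.
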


\section{Propagation Theory in Optical Systems}
\subsection{ABCD Matrices and Linear Canonical Transforms}
In the paraxial approximation, lossless optical systems act linearly on rays \cite{Arnaud1976}.

\begin{definition}[ABCD Matrix]
For a four-dimensional phase space vector $\boldsymbol{\xi} = (x, y, p_x, p_y)^T$, the input-output relation is described by a $4 \times 4$ ray transfer matrix $\bm{S}$:
\begin{equation}
\boldsymbol{\xi}_{\text{out}} = \bm{S} \boldsymbol{\xi}_{\text{in}} = \begin{pmatrix} \bm{A} & \bm{B} \\ \bm{C} & \bm{D} \end{pmatrix} \boldsymbol{\xi}_{\text{in}}
\end{equation}
where $\bm{A}, \bm{B}, \bm{C}, \bm{D}$ are all $2 \times 2$ matrices.
\end{definition}

To ensure the transformation is canonical, the matrix $\bm{S}$ must belong to the real symplectic group $Sp(4, \mathbb{R})$, satisfying:
\begin{equation}
\bm{S}^\top \bm{J} \bm{S} = \bm{J}
\end{equation}
where $\bm{J}$ is the standard symplectic matrix \cite{Arnold1989}.

\begin{proposition}[Linear Canonical Transform of WDF]
If a light field undergoes a linear canonical transformation described by symplectic matrix $\bm{S}$, its WDF evolution obeys the coordinate transformation:
\begin{equation}
\mathcal{W}_{\text{out}}(\bm{r}, \bm{p}) = \mathcal{W}_{\text{in}}(\bm{S}^{-1} \boldsymbol{\xi})
\end{equation}
where $\boldsymbol{\xi} = (\bm{r}, \bm{p})^T$ \cite{Bastiaans1997}.
\end{proposition}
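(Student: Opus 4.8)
The plan is to reduce the statement to the behaviour of the WDF under the field-level action of the symplectic map and then exploit two structural facts already available: the bilinearity of the WDF in the field and the group structure of $Sp(4,\mathbb{R})$. First I would record that a linear canonical transform with symplectic matrix $\bm{S}$ acts on a coherent field $U_{\text{in}}$ through an integral operator $\mathcal{L}_{\bm{S}}$ with a quadratic-phase (generalized Fresnel) kernel $K_{\bm{S}}(\bm{r},\bm{r}')$, so that $U_{\text{out}} = \mathcal{L}_{\bm{S}} U_{\text{in}}$; at the level of the CSD this reads $W_{\text{out}}(\bm{r}_1,\bm{r}_2) = \iint K_{\bm{S}}^*(\bm{r}_1,\bm{r}_1') K_{\bm{S}}(\bm{r}_2,\bm{r}_2')\, W_{\text{in}}(\bm{r}_1',\bm{r}_2')\, d^2\bm{r}_1'\, d^2\bm{r}_2'$. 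Because the WDF is a linear functional of the CSD and any physical CSD admits the Mercer expansion $W=\sum_n \lambda_n \phi_n^*\phi_n$ into coherent modes, it suffices to establish the covariance relation for a single coherent mode and then sum with the weights $\lambda_n$; this removes the statistical averaging and lets me work with a pure field throughout.

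For the coherent case I would avoid grinding the full four-dimensional quadratic-phase integral by using the fact that $Sp(4,\mathbb{R})$ is generated by a short list of elementary symplectic maps: free propagation (a shear $\bm{r}\mapsto \bm{r}+(z/n)\bm{p}$, $\bm{p}\mapsto\bm{p}$), the thin lens (the conjugate shear in $\bm{p}$), anisotropic magnification ($\bm{r}\mapsto \bm{m}\bm{r}$, $\bm{p}\mapsto \bm{m}^{-1}\bm{p}$), and the ordinary Fourier transform ($\bm{r}\mapsto\bm{p}$, $\bm{p}\mapsto-\bm{r}$). For each generator the induced action on the defining WDF integral is elementary: the two shears and the magnification are handled by a single change of the difference and centre variables $\Delta\bm{r}$ and $\bm{r}$, while for the Fourier generator one uses that the WDF of $\widehat{U}$ is the WDF of $U$ with the conjugate pair $(\bm{r},\bm{p})$ rotated by $90^\circ$. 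In every case the computation returns exactly $\mathcal{W}_{\text{in}}(\bm{S}_{\text{gen}}^{-1}\boldsymbol{\xi})$, and since $\det\bm{S}_{\text{gen}}=1$ no Jacobian factor appears, so the $1/(2\pi)^2$ normalization is automatically preserved.

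The result for a general $\bm{S}$ then follows by composition. Writing $\bm{S}=\bm{S}_k\cdots\bm{S}_1$ as a product of generators, the operators compose as $\mathcal{L}_{\bm{S}_k}\cdots\mathcal{L}_{\bm{S}_1}=\mathcal{L}_{\bm{S}}$ up to an overall scalar phase that cancels in the bilinear WDF, and the covariance relation chains: if $\mathcal{W}$ transforms as $\boldsymbol{\xi}\mapsto\bm{S}_j^{-1}\boldsymbol{\xi}$ at each stage, then after all stages it transforms as $\boldsymbol{\xi}\mapsto \bm{S}_1^{-1}\cdots\bm{S}_k^{-1}\boldsymbol{\xi}=(\bm{S}_k\cdots\bm{S}_1)^{-1}\boldsymbol{\xi}=\bm{S}^{-1}\boldsymbol{\xi}$, which is the claim. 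Here the symplectic condition $\bm{S}^\top\bm{J}\bm{S}=\bm{J}$ enters twice: it guarantees that the elementary maps close up to generate the full group, and that $\det\bm{S}=1$, so that the rigid phase-space relabeling $\mathcal{W}_{\text{out}}(\boldsymbol{\xi})=\mathcal{W}_{\text{in}}(\bm{S}^{-1}\boldsymbol{\xi})$ is measure-preserving, consistent with Liouville's theorem.

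The step I expect to be the main obstacle is the Fourier-type ($\bm{B}$-invertible) generator together with the phase bookkeeping. Verifying covariance for the pure shears and the magnification is immediate, but the Fourier generator requires care with the quadratic-phase kernel and with the metaplectic sign and branch ambiguity arising from the two-to-one cover $Mp(4)\to Sp(4)$: I must check that the undetermined overall phase of $\mathcal{L}_{\bm{S}}$ genuinely drops out of the Hermitian bilinear form $U_{\text{out}}^*\,U_{\text{out}}$ defining the WDF, so that the WDF depends only on $\bm{S}\in Sp(4,\mathbb{R})$ and not on the chosen metaplectic lift. As an alternative to the generator argument I would, if needed, carry out the direct route—substituting $K_{\bm{S}}$ twice into the WDF definition and performing the resulting Gaussian integral—where the same symplectic identities reorganize the exponent into a $-\tfrac{1}{2}(\bm{S}^{-1}\boldsymbol{\xi})^\top(\cdots)$ form; there the only difficulty is the organization of the multidimensional Gaussian integral rather than any conceptual point.
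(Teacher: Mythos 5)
The paper itself states this proposition without any proof, deferring entirely to the citation of Bastiaans, so there is no in-paper argument to compare against; your proposal must therefore stand on its own, and it does. The structure is sound and standard: reducing to a single coherent mode via the kernel action on the CSD (the Mercer step is actually dispensable, since the double-kernel formula for $W_{\text{out}}$ feeds directly into the WDF integral whether or not the CSD factorizes), verifying covariance on a generating set of $\operatorname{Sp}(4,\mathbb{R})$, and chaining the relation $\boldsymbol{\xi}\mapsto\bm{S}_j^{-1}\boldsymbol{\xi}$ through the composition --- including the correct order reversal $(\bm{S}_k\cdots\bm{S}_1)^{-1}=\bm{S}_1^{-1}\cdots\bm{S}_k^{-1}$. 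Your observation that the metaplectic phase ambiguity cancels in the Hermitian bilinear form $U^*U$ is exactly the point that makes the WDF covariance depend only on $\bm{S}\in\operatorname{Sp}(4,\mathbb{R})$ and not on the lift. Two small corrections: for a general (non-symmetric) magnification block $\bm{m}$ the symplectic conjugate action on momentum is $\bm{p}\mapsto(\bm{m}^{\top})^{-1}\bm{p}$, not $\bm{m}^{-1}\bm{p}$; and you could simplify the ``main obstacle'' you identify, since the two unipotent shear subgroups $\bigl(\begin{smallmatrix}\bm{I}&\bm{B}\\ \bm{0}&\bm{I}\end{smallmatrix}\bigr)$, $\bigl(\begin{smallmatrix}\bm{I}&\bm{0}\\ \bm{C}&\bm{I}\end{smallmatrix}\bigr)$ with $\bm{B},\bm{C}$ symmetric already generate all of $\operatorname{Sp}(4,\mathbb{R})$ (the Fourier element is a product of three shears, the lens--propagation--lens realization), so the delicate $\bm{B}$-invertible kernel computation can be avoided entirely and only the two elementary changes of variables in $(\bm{r},\Delta\bm{r})$ need to be carried out explicitly.
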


\section{Conclusion}
By combining phase space optics and Hamiltonian optics, we have constructed a multi-level source modeling system:

1. \textbf{Theoretical unity:} The Wigner distribution function provides a unified description of wave optics and geometric optics \cite{Bastiaans1997}, naturally including interference and diffraction effects in phase space.

2. \textbf{Model optimization:} For different types of sources, we have determined optimal modeling strategies:

\begin{itemize}
    \item GSM/TGSM sources: covariance matrix method, using symplectic geometry to simplify complex integral transforms to matrix multiplication \cite{Simon1990}.
    \item Quasi-homogeneous sources: Monte Carlo ray tracing, efficient and accurate in the Liouville limit \cite{Carter1977}.
    \item Fully coherent light: coherent mode decomposition, handling interference effects of non-Gaussian distributions \cite{Gamo1963}.
\end{itemize}

3. \textbf{Strict demarcation:} Through the Moyal expansion, we mathematically prove that Hamiltonian optics is the zeroth-order approximation of the Wigner transport equation \cite{Moyal1949}, and the applicability of geometric optics is quantitatively determined by the generalized Fresnel number \cite{Visser2002}.

4. \textbf{Physical insight:} The boundedness theorem of the twist parameter reveals the intrinsic connection between partial coherence and orbital angular momentum \cite{Friberg1994}, providing theoretical guidance for novel source design.

The phase space modeling system established in this report not only has theoretical rigor but also provides efficient computational tools for practical optical system design, with important application value in computational lithography, laser processing, optical imaging, and other fields.

\newpage

\end{document}